\newcommand\reallywidehat[1]{%
\savestack{\tmpbox}{\stretchto{%
  \scaleto{%
    \scalerel*[\widthof{\ensuremath{#1}}]{\kern-.6pt\bigwedge\kern-.6pt}%
    {\rule[-\textheight/2]{1ex}{\textheight}}%WIDTH-LIMITED BIG WEDGE
  }{\textheight}% 
}{0.5ex}}%
\stackon[1pt]{#1}{\tmpbox}%
}
\newcommand{\natnum}{\mathbb{N}}
\newcommand{\ints}{\mathbb{Z}}
\newcommand{\repfact}{\rho}
\newcommand{\lleq}{\leq_L}
\newcommand{\lgeq}{\geq_L}
\DeclarePairedDelimiter{\ceil}{\lceil}{\rceil}
\DeclarePairedDelimiter{\floor}{\lfloor}{\rfloor}
\newcommand{\ceilfrac}[2]{\ceil{\tfrac{#1}{#2}}}
\newcommand{\floorfrac}[2]{\floor{\tfrac{#1}{#2}}}
\begin{document}

\mainmatter              % start of the contributions
\title{Algorithms for Replica Placement in High-Availability Storage }
\titlerunning{Survivable Replica Placement}  % abbreviated title (for running head)
%                                     also used for the TOC unless
%                                     \toctitle is used
%
\author{K. Alex Mills, R. Chandrasekaran, Neeraj Mittal}
\authorrunning{K. Alex Mills et al.} % abbreviated author list (for running head)
%
%%%% list of authors for the TOC (use if author list has to be modified)
\tocauthor{K. Alex Mills, R. Chandrasekaran, Neeraj Mittal}
\institute{Department of Computer Science \\The University of Texas at Dallas, Richardson Texas, USA\\
\email{\{k.alex.mills,chandra,neerajm\}@utdallas.edu}}

\maketitle              % typeset the title of the contribution

\begin{abstract}
A new model of causal failure is presented and used to solve a novel replica placement problem in data centers. The model describes dependencies among system components as a directed graph. A replica placement is defined as a subset of vertices in such a graph. A criterion for optimizing replica placements is formalized and explained. In this work, the optimization goal is to avoid choosing placements in which a single failure event is likely to wipe out multiple replicas. Using this criterion, a fast algorithm is given for the scenario in which the dependency model is a tree. The main contribution of the paper is an $O(n + \repfact^2)$ dynamic programming algorithm for placing $\repfact$ replicas on a tree with $n$ vertices. This algorithm exhibits the interesting property that only \emph{two} subproblems need to be recursively considered at each stage. An $O(n^2 \repfact)$ greedy algorithm is also briefly reported.
\end{abstract}

\section{Introduction}

With the surge towards the cloud, our websites, services and data are increasingly being hosted by third-party data centers. These data centers are often contractually obligated to ensure that data is rarely, if ever unavailable. One cause of unavailability is co-occurring component failures, which can result in outages that can affect millions of websites \cite{Ver:2013:Blog}, and can cost millions of dollars in profits \cite{Ple:2013:Blog}. An extensive one-year study of availability in Google's cloud storage infrastructure showed that such failures are relatively harmful. Their study emphasizes that ``correlation among node failure dwarfs all other contributions to unavailability in our production environment" \cite{ForFra+:2010:OSDI}.

We believe that the correlation found among failure events arises due to dependencies among system components. Much effort has been made in the literature to produce quality statistical models of this correlation. But in using such models researchers do not make use of the fact that these dependencies can be explicitly modeled, since they are known to the system designers. In contrast, we propose a model wherein such dependencies are included, and demonstrate how an algorithm may make use of this information to optimize placement of data replicas within the data center.

To achieve high availability, data centers typically store multiple replicas of data to tolerate the potential failure of system components. This gives rise to a \emph{placement problem}, which, broadly speaking, involves determining which subset of nodes in the system should store a copy of a given file so as to maximize a given objective function (\emph{e.g.}, reliability, communication cost, response time, or access time).  While our focus is on replica placements, we note that our model could also be used to place replicas of other system entities which require high-availability, such as virtual machines and mission-critical tasks.

\begin{figure}[b]
\begin{subfigure}[b]{0.5\textwidth}
\begin{tikzpicture}[scale=0.45, transform shape]
\tikzstyle{every node}=[minimum size=0.75cm, circle, align=center, font=\LARGE,  draw=black]

\node (agg1) {};
\node [right = 5.5cm  of agg1](agg2) {$v$};
\node [below left = 1cm of agg1] (rack2) {};
\node [below right = 1cm of agg1](rack1) {};
\node [below right = 1 cm of agg2](rack3) {};
\node [below left = 1cm of agg2](rack4) {$u$};
\node [below = of rack1] (srv1) {};
\node [right = 0.5 of srv1] (srv2) {};
\node [below = of rack2] (srv3) {};
\node [right = 0.5 of srv3] (srv4) {};

\node [below = of rack4] (srv6) {};
\node [below = of rack3] (srv8) {};
\node [right = 0.5 of srv8] (srv7) {};
\node [right = 0.5 of srv6] (srv9) {};
\node [left = 0.5 of srv6] (srv0) {};

\node [color=white, text= black, below = 0.25 cmof srv6] (C) {$B$};
\node [color=white, text= black, below = -.25cm of C] (B) {$C$};
\node [color=white, text= black, below = 0.25cm of srv0] (A) {$A$};

\node [color=white, text= black, right = 0cm of srv7, text width=1cm] (cap1) {Servers};
\node [color=white, text= black, above = 0.25cm of cap1, text width=1cm] (cap2) {Racks};
\node [color=white, text= black, above = 0cm of cap2, text width=1cm] (cap3) {Switches};

\node [color=white, text=black, right = -0.25cm of C, yshift=-0.5cm] (cap4) {$\Bigg\}$ Replicas};

\draw[->] (agg1) -- (rack1);
\draw[->] (agg1) -- (rack2);
\draw[->] (agg2) -- (rack3);
\draw[->] (agg2) -- (rack4);
\draw[->] (rack1) -- (srv1);
\draw[->] (rack1) -- (srv2);
\draw[->] (rack2) -- (srv3);
\draw[->] (rack2) -- (srv4);

\draw[->] (rack4) -- (srv6);
\draw[->] (rack3) -- (srv9);
\draw[->] (rack3) -- (srv7);
\draw[->] (rack3) -- (srv8);
\draw[->] (rack4) -- (srv0);

\end{tikzpicture}
\vspace{-0.5cm}
\caption{Scenario I} \label{fig:scenarioI}
\end{subfigure}
\begin{subfigure}[b]{0.5\textwidth}
\begin{tikzpicture}[scale=0.45, transform shape]
\tikzstyle{every node}=[minimum size=0.75cm, circle, align=center, font=\LARGE,  draw=black]

\node (agg1) {};
\node [right = 5.5cm  of agg1](agg2) {$v$};
\node [below left = 1cm of agg1] (rack2) {};
\node [below right = 1cm of agg1](rack1) {};
\node [below right = 1 cm of agg2](rack3) {};
\node [below left = 1cm of agg2](rack4) {$u$};
\node [below = of rack1] (srv1) {};
\node [right = 0.5 of srv1] (srv2) {};
\node [below = of rack2] (srv3) {};
\node [right = 0.5 of srv3] (srv4) {}; 

\node [below = of rack4] (srv6) {};
\node [below = of rack3] (srv8) {};
\node [right = 0.5 of srv8] (srv7) {};
\node [right = 0.5 of srv6] (srv9) {};
\node [left = 0.5 of srv6] (srv0) {};

\node [color=white, text= black, below = 0.25cm of srv2] (B) {$A$};
\node [color=white, text= black, below = 0.25 cmof srv6] (A) {$B$};
\node [color=white, text= black, below = 0.25cm of srv9] (C) {$C$};

\node [color=white, text= black, right = 0cm of srv7, text width=1cm] (cap1) {Servers};
\node [color=white, text= black, above = 0.25cm of cap1, text width=1cm] (cap2) {Racks};
\node [color=white, text= black, above = 0cm of cap2, text width=1cm] (cap3) {Switches};

\node [color=white, text=black, right = 0cm of C, yshift=-0.1cm] (cap4) {Replicas};

\draw[->] (agg1) -- (rack1);
\draw[->] (agg1) -- (rack2);
\draw[->] (agg2) -- (rack3);
\draw[->] (agg2) -- (rack4);
\draw[->] (rack1) -- (srv1);
\draw[->] (rack1) -- (srv2);
\draw[->] (rack2) -- (srv3);
\draw[->] (rack2) -- (srv4);

\draw[->] (rack4) -- (srv6);
\draw[->] (rack3) -- (srv9);
\draw[->] (rack3) -- (srv7);
\draw[->] (rack3) -- (srv8);
\draw[->] (rack4) -- (srv0);

\end{tikzpicture}
\vspace{-0.5cm}
\caption{Scenario II} \label{fig:scenarioII}
\end{subfigure}
\caption{Two scenarios represented by directed trees.}
\label{fig:scenarios}
\end{figure}
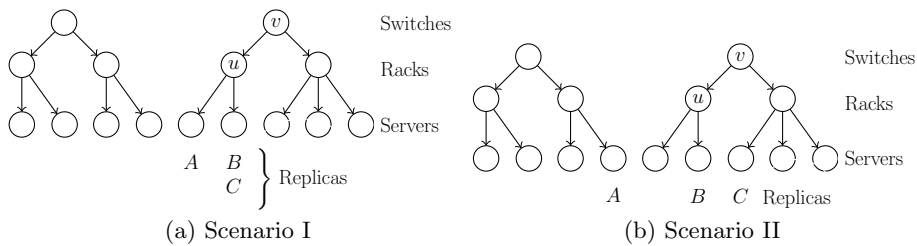

%% The question of how to place replicas to avoid the impact of ``correlated'' failure becomes a concern.

In this work, we present a new model for causal dependencies among failures, and a novel algorithm for optimal replica placement in our model. An example model is given as Fig. \ref{fig:scenarios}, in which three identical replicas of the same block of data are distributed on servers in a data center. Each server receives power from a surge protector which is located on each server rack. In Scenario I, each replica is located on nodes which share the same rack. In Scenario II, each replica is located on separate racks. As can be seen from the diagram of Scenario I (Fig. \ref{fig:scenarioI}), a failure in the power supply unit (PSU) on a single rack could result in a situation where every replica of a data block is completely unavailable, whereas in Scenario II, (Fig. \ref{fig:scenarioII}) three PSUs would need to fail in order to achieve the same result. In practice, Scenario I is avoided by ensuring that each replica is placed on nodes which lie on separate racks. This heuristic is already part of known best-practices. Our observation is that this simple heuristic can be suboptimal under certain conditions. For example, consider a failure in the aggregation switch which services multiple racks. Such a failure could impact the availability of every data replica stored on the rack. Moreover, this toy example only represents a small fraction of the number of events that could be modeled in a large data center.

% \subsection{Related Work}    (removed to save space)

While many approaches for replica placement have been proposed, our approach of modeling causal dependencies among failure events appears to be new. Other work on reliability in storage area networks has focused on objectives such as mean time to data loss \cite{DBLP:conficdcsLianCZ05,lian-chen-zhang}. These exemplify an approach towards correlated failure which we term ``measure-and-conquer''. In measure-and-conquer approaches, a measured degree of correlation is given as a parameter to the model. In contrast, we model explicit causal relations among failure events which we believe  give rise to the correlation seen in practice. In \cite{DBLP:conficdcsLianCZ05} the authors consider high-availability replica placement, but are primarily focused on modeling the effects of repair time. Later work \cite{lian-chen-zhang} begins to take into account information concerning the network topology, which is a step towards our approach. Similar measure-and-conquer approaches are taken in \cite{ForFra+:2010:OSDI,BakWyl+:2002:TR, WeaMos+:2002:SRDS, NatYu+:2006:NSDI}. More recently, Pezoa and Hayat \cite{Pezoa} have presented a model in which spatially correlated failures are explicitly modeled. However, they consider the problem of task allocation, whereas we are focused on replica placement. In the databases community, work on replica placement primarily focuses on finding optimal placements in storage-area networks with regard to a particular distributed access model or mutual exclusion protocol \cite{HuJia+:2001:JPDC, SheWu:2001:TCS, ZhaWu+:2009:JPDC}. In general, much of the work from this community focuses on specialized communication networks and minimizing communication costs --- system models and goals which are substantially different from our own.

Recently, there has been a surge of interest in computer science concerning cascading failure in networks \cite{BluEas+:2011:FOCS, NieLui+:2014:IPL, KimDob:2010:TransRel, ZhuYan+:2014:TPDS}. While our model is most closely related to this work, the existing literature is primarily concerned with applications involving large graphs intended to capture the structure of the world-wide web, or power grids. The essence of all these models is captured in the \textit{threshold cascade model} \cite{BluEas+:2011:FOCS}. This model consists of a directed graph in which each node $v$ is associated with a threshold, $\ell(v) \in \natnum^+$. A node $v$ experiences a cascading failure if at least $\ell(v)$ of its incoming neighbors have failed. This model generalizes our own, wherein we pessimistically assume that $\ell(v) = 1$ for all nodes $v$. Current work in this area is focused on network design \cite{BluEas+:2011:FOCS}, exploring new models  \cite{NieLui+:2014:IPL, KimDob:2010:TransRel}, and developing techniques for adversarial analysis \cite{ZhuYan+:2014:TPDS}. To our knowledge, no one has yet considered the problem of replica placement in such models.

\section{Model}\label{sec:model}

We model dependencies among failure events as a directed graph, where nodes represent failure events, and a directed edge from $u$ to $v$ indicates that the occurrence of failure event $u$ could trigger the occurrence of failure event $v$. We refer to this graph as the \emph{failure model}

Given such a graph as input, we consider the problem of selecting nodes on which to store data replicas. Roughly, we define a \emph{placement problem} as the problem of selecting a subset of these vertices, hereafter referred to as a \emph{placement}, from the failure model so as to satisfy some safety criterion. In our application, only those vertices which represent storage servers are candidates to be part of a placement. We refer to such vertices as \emph{placement candidates}. Note that the graph also contains vertices representing other types of failure events, which may correspond to real-world hardware unsuitable for storage (such as a ToR switch), or even to abstract events which have no associated physical component. In most applications, the set of placement candidates forms a subset of the set of vertices.

More formally, let $E$ denote the set of failure events, and $C$ denote the set of placement candidates. We are interested in finding a \emph{placement} of size $\repfact$, which is defined to be a set $P \subseteq C$, with $|P| = \repfact$. Throughout this paper we will use $P$ to denote a placement, and $\repfact$ to denote its size. We consistently use $C$ to denote the set of placement candidates, and $E$ to denote the set of failure events.

Let $G = (V,A)$ be a directed graph with vertices in $V$ and edges in $A$. The vertices represent both events in $E$ and candidates in $C$, so let \mbox{$V = E \cup C$}. A directed edge between events $e_1$ and $e_2$ indicates that the occurrence of failure event $e_1$ can trigger the occurrence of failure event $e_2$. A directed edge between event $e$ and candidate $c$ indicates that the occurrence of event $e$ could compromise candidate $c$. We will assume failure to act transitively. That is, if a failure event occurs, all failure events reachable from it in $G$ also occur. This a pessimistic assumption which leads to a conservative interpretation of failure.

We now define the notions of \textit{failure number} and \textit{failure aggregate}.

\begin{definition}\label{def-failure}
Let $e \in E$. The \textit{failure number} of event $e$, denoted $f(e,P)$, for a given placement $P$, is defined as the number of candidates in $P$ whose correct operation could be compromised by occurrence of event $e$. In particular, $$f(e,P) = | \{ p \in P \mid p \text{ is reachable from } e \text{ in } G  \}|.$$
\end{definition}
As an example, node $u$ in Fig. \ref{fig:scenarios} has failure number $3$ in Scenario I, and failure number $1$ in Scenario II. The following property is an easy consequence of the above definition. A formal proof can be found in the appendix.

\begin{property}\label{lem-desc}
For any placement $P$ of replicas in tree $T$, if node $i$ has descendant $j$, then $f(j, P) \leq f(i, P)$.
\end{property}

The failure number captures a conservative criterion for a safe placement. Intuitively, we consider the worst case scenario, in which every candidate which \emph{could} fail due to an occurring event \emph{does} fail. Our goal is to find a placement which does not induce large failure numbers in any event. 
To aggregate this idea across all events, we define \textit{failure aggregate}, a measure that accounts for the failure number of every event in the model.

\begin{definition}
The \emph{failure aggregate} of a placement $P$ is a vector in $\mathbb{N}^{\repfact+1}$, denoted $\vec{f}(P)$, where \mbox{$\vec{f}(P) := \langle p_\repfact, ..., p_1, p_0\rangle$}, and each $p_i := \left| \big\{ e \in E \mid f(e, P) = i\big\} \right|$.
\end{definition}
In Fig. \ref{fig:scenarios}, node $v$ has failure aggregate $\langle 2, 0, 0, 1 \rangle$ in Scenario I and failure aggregate $\langle 1, 0, 2, 0 \rangle$ in Scenario II. Failure aggregate is also computed in Fig. \ref{fig-balanced-survivalnums}.

In all of the problems considered in this paper, we are interested in optimizing $\vec{f}(P)$. When optimizing a vector quantity, we must choose a meaningful way to totally order the vectors. In the context of our problem, we find that ordering the vectors with regard to the \emph{lexicographic order} is both meaningful and convenient. The lexicographic order $\leq_L$ between $\vec{f}(P) = \langle p_\repfact, ..., p_1, p_0\rangle$ and $\vec{f}(P') = \langle p'_\repfact, ..., p'_1, p'_0\rangle$ is defined via the following formula:
$$\vec{f}(P) \leq_L \vec{f}(P') \iff \exists~ m > 0,  ~\forall ~i > m  \big[p_i = p'_i \wedge p_m \leq p'_m\big]. $$
To see why this is desirable, consider a placement $P$ which lexicominimizes $\vec{f}(P)$ among all possible placements. Such a placement is guaranteed to minimize $p_\repfact$, i.e. the number of nodes which compromise \emph{all} of the entities in our placement. Further, among all solutions minimizing $p_\repfact$, $P$ also minimizes $p_{\repfact-1}$, the number of nodes compromising \emph{all but one} of the entities in $P$, and so on for $p_{\repfact-2}, p_{\repfact-3},..., p_{0}$. Clearly, the lexicographic order nicely prioritizes minimizing the entries of the vector in an appealing manner. 

Throughout the paper, any time a vector quantity is maximized or minimized, we are referring to the maximum or minimum value in the lexicographic order. We will also use $\vec{f}(P)$ to denote the failure aggregate, and $p_i$ to refer to the $i^{th}$ component of $\vec{f}(P)$, where $P$ can be inferred from context.

In the most general case, we could consider the following problem.
\begin{problem}\label{prob:additive-function}
Given graph $G = (V,A)$ with $V = C \,\cup\, E$, and positive integer $\repfact$ with $\repfact < |C|$, find a placement $P \subseteq C$ with $|P| = \repfact$ such that $\vec{f}(P)$ is lexicominimum.
\end{problem}
Problem \ref{prob:additive-function} is NP-hard to solve, even in the case where $G$ is a bipartite graph. In particular, a reduction to independent set can be shown. However, the problem is tractable for special classes of graphs, one of which is the case wherein the graph forms a directed, rooted tree with leaf set $L$ and $C = L$. Our main contribution in this paper is a fast algorithm for solving Problem \ref{prob:additive-function} in such a case. We briefly mention a greedy algorithm which solves the problem on $O(n^2\repfact)$ time. However, since $n \gg \repfact$ in practice our result of an $O(n + \repfact^2)$ algorithm is much preferred. 

\subsection{An $O(n^2\repfact)$ Greedy Algorithm}

The greedy solution to this problem forms a partial placement $P'$, to which new replicas are added one at a time, until $\repfact$ replicas have been placed overall. $P'$ starts out empty, and at each step, the leaf $u$ which lexicominimizes $\vec{f}(P' \cup \{u\})$ is added to $P'$. This greedy algorithm correctly computes an optimal placement, however its running time is $O(n^2\repfact)$ for a tree of unbounded degree. This running time comes about since each iteration requires visiting $O(|L|)$ leaves for inclusion. For each leaf $q$ which is checked, every node on a path from $q$ to the root must have its failure number computed. Both the length of a leaf-root path and the number of leaves can be bounded by $O(n)$ in the worst case, yielding the result. 

That the greedy algorithm works correctly is not immediately obvious. It can be shown via an exchange argument that each partial placement found by the greedy algorithm is a subset of some optimal placement. This is the content of Theorem 1 below.

To establish the correctness of the greedy algorithm, we first introduce some notation. For a placement $P$ and $S \subseteq V$, let $\vec{f}(S,P) = \langle g_\repfact, g_{\repfact - 1}, ..., g_1, g_0 \rangle$ where \mbox{$g_i := | \{ x \in S \mid f(x,P) = i \} |$}. Intuitively, $\vec{f}(S,P)$ gives the failure aggregate for all nodes in set $S \subseteq V$. We first establish the truth of two technical lemmas before stating and proving Theorem \ref{thm-greedy}.

\begin{comment} We introduce notation for the set of nodes on a path from $u$ to $v$ as $$u \rightsquigarrow v := \{ x \in V \mid x \text{ is on the path from node $u$ to node $v$ } \}.$$ 
\begin{equation}
X \cap Y \neq \emptyset \implies \vec{f}(X \cup Y, E) = \vec{f}(X, E) + \vec{f}(Y, E)
\end{equation}
\begin{equation}
\vec{f}(E, P \cup \{x\}) = \vec{f}(E, P) - \vec{f}(r \rightsquigarrow x, P) + \vec{f}(r \rightsquigarrow x, P \cup \{x\})
\end{equation}
\begin{equation}
x \notin P \implies \reallywidehat{\vec{f}(r \rightsquigarrow x, P )} = \vec{f}(r \rightsquigarrow x, P \cup \{x\})
\end{equation}

Additionally, if $\vec{v} = \langle v_1, ..., v_k \rangle$ we define the vector shifted one index to the left as $\hat{\vec{v}} = \langle v_2, ..., v_k, 0 \rangle$. The following propositions are trivially obtained from these definitions, and are presented without proof.

\begin{equation}
\vec{v}_\repfact = 0 \implies \vec{v} \lleq \hat{\vec{v}}
\end{equation}
\begin{equation}
\hat{\vec{v}} + \hat{\vec{w}} = \reallywidehat{\vec{v} + \vec{w}}
\end{equation}
\begin{equation}
\forall \alpha \in \mathbb{R} : \alpha\hat{\vec{v}} = \reallywidehat{\alpha \vec{v}}
\end{equation}
\end{comment}

\begin{lemma}\label{lem-path-ineq}
Let $r$ be the root of a failure model given by a tree. Given $P \subseteq C$, $a,b \in C - P$. If $f(r\rightsquigarrow a, P) <_L f(r\rightsquigarrow b, P)$ then $f(P \cup \{a\}) <_L f(P \cup \{b\})$.
\end{lemma}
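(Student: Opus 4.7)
The plan is to reduce the inequality on full placement aggregates to one about the path-restricted aggregates by exploiting the tree structure. The key structural observation is this: when a leaf $x \in C - P$ is added to $P$, the failure number $f(y, P)$ changes for exactly the nodes $y$ on the path $r \rightsquigarrow x$ (these are precisely the ancestors of $x$, from which $x$ is reachable), and each such $y$ has its failure number incremented by exactly one. Every node off this path keeps its failure number. This is consistent with Property~\ref{lem-desc} and uses the fact that, in a tree, leaves are reachable from exactly their ancestors.

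Let $A := \vec{f}(r \rightsquigarrow a,\, P)$ and $B := \vec{f}(r \rightsquigarrow b,\, P)$. For any vector $\vec{v} = \langle v_\repfact, \ldots, v_0 \rangle$ write $\widehat{\vec{v}}$ for its left shift, defined by $\widehat{\vec{v}}[i] = \vec{v}[i-1]$ for $i \geq 1$ and $\widehat{\vec{v}}[0] = 0$. Since every ancestor of $a$ is pushed from bucket $f(y,P)$ to bucket $f(y,P)+1$ in $\vec{f}(\cdot)$ (and the counts at all other buckets are untouched), we obtain the decompositions
\[
\vec{f}(P \cup \{a\}) \;=\; \vec{f}(P) - A + \widehat{A}, \qquad \vec{f}(P \cup \{b\}) \;=\; \vec{f}(P) - B + \widehat{B}.
\]
Because adding the common vector $\vec{f}(P)$ to both sides preserves the lexicographic order, it suffices to show that $A \llt B$ implies $\widehat{A} - A \llt \widehat{B} - B$.

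For the core comparison, let $m$ be the largest index at which $A$ and $B$ differ; by definition of $\llt$, $A[m] < B[m]$ and $A[i] = B[i]$ for all $i > m$. The $i$-th component of $\widehat{A} - A$ is $A[i-1] - A[i]$, and similarly for $B$. Thus for every $i > m+1$ the entries of $\widehat{A} - A$ and $\widehat{B} - B$ coincide (both $i$ and $i-1$ exceed $m$), whereas at $i = m+1$ they become $A[m] - A[m+1]$ and $B[m] - B[m+1]$. Since $A[m+1] = B[m+1]$ and $A[m] < B[m]$, the $a$-side is strictly smaller at $i = m+1$. This makes $m+1$ the first index of disagreement, with the $a$-side smaller, yielding $\widehat{A} - A \llt \widehat{B} - B$ and hence the conclusion.

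The main delicate point I anticipate is the bookkeeping around the shift operation: one must pad the vectors with enough zero entries at the high-index end so that $\widehat{A}[\repfact]$ and $\widehat{B}[\repfact]$ are not silently truncated (which is safe because the maximum failure number after one insertion is at most $|P|+1$), and one must verify that the shared ancestors lying on the common prefix of $r \rightsquigarrow a$ and $r \rightsquigarrow b$ create no double-counting in the decomposition. They do not, because each identity is obtained by independently comparing a single augmented placement against the shared baseline $\vec{f}(P)$.
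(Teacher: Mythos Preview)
Your proof is correct, and it takes a genuinely different route from the paper's. The paper argues directly on the two sequences of failure numbers along the paths $r\rightsquigarrow a$ and $r\rightsquigarrow b$, splitting into two cases (whether there is a position $i$ with $f(a_i,P)<f(b_i,P)$, or whether instead the $a$-path is a strict prefix in terms of failure numbers) and then tracking which bucket $p_k$ first changes. You instead set up the algebraic identity $\vec{f}(P\cup\{x\})=\vec{f}(P)-X+\widehat{X}$ and reduce the question, via translation invariance of $\lleq$ on $\ints^n$ (the paper's Lemma~\ref{lem-logroup}), to the purely combinatorial claim that $A\llt B$ forces $\widehat{A}-A\llt\widehat{B}-B$, which you verify by locating the first disagreement at index $m+1$. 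Your approach is cleaner and avoids the somewhat delicate case split; it also makes transparent why the strict inequality is preserved. The paper's case analysis, on the other hand, is self-contained and does not need to invoke the linearly-ordered-group machinery (which the paper only introduces later). Your caveat about padding at the top index is exactly the right thing to flag; once the vectors are viewed in $\ints^{|P|+2}$ (or larger), the shift loses nothing and the argument goes through.
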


\begin{proof}
Suppose $f(r \rightsquigarrow a, P) <_L f(r \rightsquigarrow b, P)$. Let nodes on the paths from $r$ to $a$ and from $r$ to $b$ be labeled as follows:
$$r \rightarrow a_1 \rightarrow a_2 \rightarrow ... \rightarrow a_n \rightarrow a$$
$$r \rightarrow b_1 \rightarrow b_2 \rightarrow ... \rightarrow b_m \rightarrow b$$
We proceed in two cases. 

In the first case, there is some $1 \leq i \leq \min (m,n)$  for which $f(a_i, P) < f(b_i, P)$. Let $i$ be the minimum such index, and let $f(b_i, P) = k$. Clearly, \mbox{$f(P \cup \{a\})_k < f(P\cup \{b\})_k$}, since $P \cup \{b\}$ counts $b_i$ as having survival number $k$ and $P \cup\{a\}$ does not. Moreover, since $f(a_\ell, P) = f(b_\ell, P)$ for all $\ell < i$, we have that for all $j > k$, $f(P \cup \{a\})_j = f(P \cup \{b\})_j$ by Property \ref{prob:additive-function}.

In the second case, $f(a_i, P) \geq f(b_i, P)$ for all $1 \leq i \leq \min(m,n)$. 
In this case, if $f(a_i, P) > f(b_i, P)$ for some $i$, the only way we could have $f(r \rightsquigarrow a, P) <_L f(r\rightsquigarrow b, P)$ is if there is some $j > i$ with $f(a_j, P) < f(b_j, P)$, but this is a contradiction. Therefore, $f(a_i,P) = f(b_i,P)$ for all $1 \leq i \leq \min(m,n)$. So, we must also have $n \leq m$, since if $n > m$, we would have \mbox{$f(r \rightsquigarrow a, P) >_L f(r \rightsquigarrow b, P)$}. Moreover, since $f(r \rightsquigarrow a, P) <_L f(r \rightsquigarrow b, P)$, we must have that $n < m$, for if $n = m$, we would have $f(r \rightsquigarrow a, P) = f(r \rightsquigarrow b, P$), a contradiction. 
We have just shown the existence of some node $b_{n+1}$, for which we must have that \mbox{$f(b_{n+1}, P ) \leq f(a_n, P)$}. Notice that the path $r \rightsquigarrow a$ does not have an $(n+1)^{st}$ node, so it's clear that if $f(b_{n+1}, P) = k$, then $f(P \cup \{a\})_k < f(P \cup \{b\})_k$. Finally, since $n < m$, we have by Property \ref{prob:additive-function}, that $f(a_i, P) \leq f(a_n,P) \leq k$ for all $1 \leq i \leq n$. By an additional application of Property \ref{prob:additive-function} it's easy to see that for all $j > k$, we have $f(P \cup \{a\})_j = f(P \cup \{b\})_j$.  \qed
\end{proof}

From Lemma \ref{lem-path-ineq}, we obtain the following result as an easy Corollary.

\begin{corollary}\label{coro-iff}
Let $r$ be the root of a failure model given by a tree. Given $P \subseteq C$, $a,b \in C - P$. Then $f(r\rightsquigarrow a, P) \lleq f(r\rightsquigarrow b, P)$ if and only if $f(P \cup \{a\}) \lleq f(P \cup \{b\})$.
\end{corollary}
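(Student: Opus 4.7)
The plan is to prove both directions of the biconditional by invoking Lemma~\ref{lem-path-ineq}, once as stated and once with the roles of $a$ and $b$ interchanged, using totality of $\lleq$, and supplementing with a short counting argument to handle the equality case.

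For the $(\Leftarrow)$ direction I would argue by contrapositive. Suppose $f(r\rightsquigarrow b, P) \llt f(r\rightsquigarrow a, P)$. Then Lemma~\ref{lem-path-ineq}, applied with $a$ and $b$ swapped, yields $f(P \cup \{b\}) \llt f(P \cup \{a\})$, so $f(P \cup \{a\}) \not\lleq f(P \cup \{b\})$. This is exactly the contrapositive of $(\Leftarrow)$.

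For the $(\Rightarrow)$ direction I would split $f(r\rightsquigarrow a, P) \lleq f(r\rightsquigarrow b, P)$ into a strict sub-case and an equality sub-case using totality. In the strict sub-case Lemma~\ref{lem-path-ineq} applies directly and gives $f(P \cup \{a\}) \llt f(P \cup \{b\})$. In the equality sub-case, where the two path aggregates coincide, I would prove the stronger statement $f(P \cup \{a\}) = f(P \cup \{b\})$. The key observation is that, in a tree, adjoining a candidate $c$ to $P$ increments the failure number by exactly one at every node on $r \rightsquigarrow c$ and leaves every other failure number unchanged. Consequently each component of $f(P \cup \{c\})$ equals the corresponding component of $f(P)$ minus the same-index component of $f(r\rightsquigarrow c, P)$ plus its preceding component. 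When $f(r\rightsquigarrow a, P)$ and $f(r\rightsquigarrow b, P)$ agree componentwise, the two resulting placement aggregates must also agree componentwise, yielding $f(P \cup \{a\}) \lleq f(P \cup \{b\})$ (with equality, in fact).

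The main obstacle is the equality sub-case of the forward direction: Lemma~\ref{lem-path-ineq} alone does not cover it, because its hypothesis is strict. The missing ingredient is the componentwise identity above, which follows immediately from Definition~\ref{def-failure} together with the fact that in a tree the ancestors of $c$ are precisely the nodes on $r \rightsquigarrow c$. Once that identity is in hand, the remainder of the argument reduces to a mechanical application of the lemma and its symmetric version.
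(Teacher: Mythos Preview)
Your proposal is correct and follows essentially the same route as the paper: both directions hinge on Lemma~\ref{lem-path-ineq}, the backward direction is handled by contrapositive (swapping the roles of $a$ and $b$), and the forward direction splits into the strict case (Lemma~\ref{lem-path-ineq} directly) and the equality case (a counting argument noting that adding $c$ to $P$ increments exactly the failure numbers along $r\rightsquigarrow c$). Your handling of the equality case via the explicit componentwise identity is slightly cleaner than the paper's phrasing, which informally appeals to ``the sequence of failure numbers'' being the same, but the underlying idea is identical.
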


\begin{proof}
Suppose $f(r\rightsquigarrow a, P) \lleq f(r\rightsquigarrow b, P)$. If $f(r\rightsquigarrow a, P) = f(r\rightsquigarrow b, P)$, then since the only nodes which change failure number when considering placements $P$ and $P \cup \{a\}$ are those on the paths $r \rightsquigarrow a$, and each of these nodes' failure numbers increase by $1$, we must have that $f(P \cup \{a\}) = f(P \cup \{b\})$, since the sequence of failure numbers in $r \rightsquigarrow a$ and $r \rightsquigarrow b$ are the same. If $f(r \rightsquigarrow a, P) <_L f(r \rightsquigarrow b, P)$ then by Lemma \ref{lem-path-ineq} the Corollary is proven.

If instead $f(P \cup \{a\}) \lleq f(P\cup \{b\})$, and yet $f(r \rightsquigarrow a, P) >_L f(r \rightsquigarrow b, P)$, then by Lemma \ref{lem-path-ineq} we obtain that $f(P \cup \{a\}) >_L f(P \cup \{b\})$, a contradiction. \qed
\end{proof}

Given a node $u$ in a tree, let $L(u)$ be the set of all leaves which are descendants of $u$.

\begin{lemma}\label{lem-technical}
Given $P \subseteq C$, $a,b \in C$. Let $c$ be the least common ancestor of $a$ and $b$, and let $d$ be the child of $c$ on the path from $c$ to $a$. If \mbox{$f(r\rightsquigarrow a, P) \lleq f(r\rightsquigarrow b, P)$} and $X \subset C - \{a,b\}$ for which $L(d) \cap X = \emptyset$, and $a,b \notin X$ then 
$$f(P \cup X \cup \{a\})  \lleq f(P \cup X \cup \{b\}).$$ 
\end{lemma}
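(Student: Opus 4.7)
The plan is to reduce Lemma 3 to an application of Corollary 1 with $P$ replaced by $P' := P \cup X$. Once we have established the path inequality
\[
f(r \rightsquigarrow a,\, P') \lleq f(r \rightsquigarrow b,\, P'),
\]
Corollary 1 immediately yields $f(P' \cup \{a\}) \lleq f(P' \cup \{b\})$, which is the desired conclusion (working under the implicit assumption, as in Corollary 1, that $a,b \notin P$). So the entire task reduces to proving this path inequality at the enlarged placement $P'$.

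To do so, I would decompose both paths using the least common ancestor $c$: write $r \rightsquigarrow a = \pi \cup \sigma_a$ and $r \rightsquigarrow b = \pi \cup \sigma_b$, where $\pi = r \rightsquigarrow c$ is the shared prefix and $\sigma_a$, $\sigma_b$ are the strictly-below-$c$ portions (so $\sigma_a$ begins with $d$). Since histograms are additive over disjoint sets and lexicographic order is preserved under componentwise addition of a common vector, subtracting the contribution of $\pi$ from both sides shows that the hypothesis $f(r \rightsquigarrow a, P) \lleq f(r \rightsquigarrow b, P)$ is equivalent to $f(\sigma_a, P) \lleq f(\sigma_b, P)$, and the desired inequality at $P'$ is equivalent to $f(\sigma_a, P') \lleq f(\sigma_b, P')$. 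Next I would analyze how failure numbers change from $P$ to $P'$: for every node $v$, $f(v, P') = f(v, P) + |X \cap L(v)|$. Because every node in $\sigma_a$ is a descendant of $d$ and the hypothesis gives $L(d) \cap X = \emptyset$, we get $f(v, P') = f(v, P)$ for all $v \in \sigma_a$, hence $f(\sigma_a, P') = f(\sigma_a, P)$. On the $b$-side, failure numbers can only weakly increase.

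The main obstacle is a small monotonicity observation for histograms: if every value in a multiset is replaced by a value at least as large, then the resulting histogram is lex-greater-or-equal to the original. The argument is that for each threshold $k$, the number of entries $\geq k$ can only grow, and so the highest index where the two histograms first differ must have a larger count in the new histogram. Applying this to $\sigma_b$ yields $f(\sigma_b, P') \lgeq f(\sigma_b, P)$. Chaining the three inequalities gives
\[
f(\sigma_a, P') = f(\sigma_a, P) \lleq f(\sigma_b, P) \lleq f(\sigma_b, P'),
\]
which establishes the path inequality and completes the proof. The two geometric hypotheses of the lemma play clearly identifiable roles: ``$c$ is the LCA of $a$ and $b$'' is what makes $\pi$ a genuine common prefix and thus lets us cancel it; ``$L(d) \cap X = \emptyset$'' is precisely what prevents $X$ from disturbing the failure numbers along $\sigma_a$, so that the potentially harmful upward shift caused by $X$ happens only on $\sigma_b$, in the safe direction.
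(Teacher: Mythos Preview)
Your proposal is correct and follows essentially the same route as the paper's proof: reduce to the path inequality at $P' = P \cup X$, split both paths at the LCA $c$, use $L(d)\cap X=\emptyset$ to freeze the $a$-side, use weak increase of failure numbers to handle the $b$-side, chain the three inequalities, and finish with Corollary~\ref{coro-iff}. Your version is in fact slightly cleaner than the paper's in two places: by taking $\sigma_a,\sigma_b$ to start strictly below $c$ you avoid an ambiguity in the paper about whether $c$ itself belongs to the ``$a$-side'' when asserting $f(c\rightsquigarrow a,P\cup X)=f(c\rightsquigarrow a,P)$, and you make explicit the histogram-monotonicity step (pointwise increase of values $\Rightarrow$ lex-increase of the histogram) that the paper invokes without justification.
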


\begin{proof}
We have that $f(r\rightsquigarrow a, P) \lleq f(r\rightsquigarrow b, P)$. Consider $f(r\rightsquigarrow a,  P \cup X)$ and $f(r\rightsquigarrow b, P \cup X)$. We wish to show that $f(r\rightsquigarrow a, P \cup X) \lleq f(r\rightsquigarrow b, P \cup X)$. Since $c$ is the least common ancestor of $a$ and $b$, it is clear that nodes on $r \rightsquigarrow c$ have equivalent failure numbers in both cases. Therefore it suffices to show that $f(c\rightsquigarrow a, P \cup X) \lleq f(c\rightsquigarrow b, P \cup X)$.

Note that since $d \cap L(X) = \emptyset$, we have that $f(c \rightsquigarrow a, P \cup X) = f(c \rightsquigarrow a, P)$. Moreover, since the addition of nodes in $X$ cannot cause failure numbers on the path $c \rightsquigarrow b$ to decrease, we must have that $f(c \rightsquigarrow b, P) \lleq f(c \rightsquigarrow b, P \cup X)$. Altogether, we have that
$$f(c \rightsquigarrow a, P \cup X) = f(c \rightsquigarrow a, P) \lleq f(c \rightsquigarrow b, P) \lleq f(c \rightsquigarrow b, P \cup X).$$
By applying Corollary \ref{coro-iff}, we obtain that $f(P \cup X \cup \{a\}) \lleq f(P \cup X \cup \{b\})$. \qed
\end{proof}

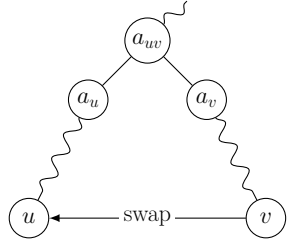
\begin{figure}[h]

		\centering
		\begin{tikzpicture}[scale=0.525, transform shape]
\tikzstyle{every node}=[minimum size=1cm, align=center, font=\LARGE]
\tikzset{>=latex}
\tikzset{snake it/.style={decorate, decoration=snake},
                   postaction={decoration={markings,mark=at position 1 with {\arrow{>}}},decorate}}

\pgfmathsetmacro{\yoffset}{-1.5}
\node (auv)[circle, draw=black] at (0, 0) {$a_{uv}$};
\node (av)[circle, draw=black] at (1.5, \yoffset) {$a_v$};
\node (au)[circle, draw=black] at (-1.5, \yoffset) {$a_u$};
\node (u)[circle, draw=black] at (-3, 3*\yoffset) {$u$};
\node (v)[circle, draw=black] at (3, 3*\yoffset) {$v$};

\draw[-] (auv) -- (av);
\draw[-] (auv) -- (au);
\draw[snake it] (au) -- (u);
\draw[snake it] (av) -- (v);
\draw[snake it] (auv) -- (1,1);
\draw[->] (v) -> (u) node [midway, fill=white] {swap};
\end{tikzpicture}
		\caption{Named nodes used in Theorem \ref{thm-greedy}. The arrow labeled ``swap" illustrates the leaf nodes between which replicas are moved, and is not an edge of the graph.}\label{fig:thm-1}
\end{figure}

\begin{theorem}\label{thm-greedy}
Let $P_i$ be the partial placement from step $i$ of the greedy algorithm. Then there exists an optimal placement $P^\ast$, with $|P^\ast| = \repfact$ such that $P_i \subseteq P^\ast$.
\end{theorem}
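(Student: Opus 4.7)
The plan is to proceed by induction on $i$. The base case $i=0$ holds because $P_0 = \emptyset$ sits inside every optimal placement. For the inductive step I would assume $P_i \subseteq P^*$ for some optimal $P^*$ with $|P^*|=\repfact$, let $u$ denote the leaf that the greedy algorithm selects at step $i+1$, and observe that nothing remains to prove if $u \in P^*$. The work is therefore to exhibit, in the case $u \notin P^*$, a leaf $v \in P^* - P_i$ for which $P^{\ast\ast} := (P^* - \{v\}) \cup \{u\}$ is itself optimal, since then $P_{i+1} \subseteq P^{\ast\ast}$.

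The exchange will be carried out through Lemma \ref{lem-technical}, instantiated with $a := u$, $b := v$, $P := P_i$, and $X := P^* - P_i - \{v\}$. Under these identifications $P_i \cup X \cup \{v\} = P^*$ and $P_i \cup X \cup \{u\} = P^{\ast\ast}$, so the lemma's conclusion is exactly what we need. The first hypothesis, $f(r \rightsquigarrow u, P_i) \lleq f(r \rightsquigarrow v, P_i)$, comes essentially for free: since greedy preferred $u$ over $v$ at step $i+1$, we have $f(P_i \cup \{u\}) \lleq f(P_i \cup \{v\})$, and Corollary \ref{coro-iff} converts this into the required path inequality.

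The hard part will be securing the disjointness hypothesis $L(a_u) \cap X = \emptyset$, where $a_u$ is the child of $a_{uv} := \mathrm{LCA}(u,v)$ on the path toward $u$. My plan is to choose $v \in P^* - P_i$ so as to \emph{maximize the depth of $a_{uv}$} in the tree. With this choice, any hypothetical $v' \in X \cap L(a_u)$ would make $a_u$ a common ancestor of $u$ and $v'$, forcing $\mathrm{LCA}(u,v')$ to have depth at least that of $a_u$, which is strictly greater than the depth of $a_{uv}$; this contradicts the maximality of $v$. The leaf $v$ itself lies in $L(a_v)$ rather than $L(a_u)$, because as a leaf its path up from $a_{uv}$ leaves through the sibling $a_v \neq a_u$.

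Once both hypotheses are in place, Lemma \ref{lem-technical} yields $f(P^{\ast\ast}) \lleq f(P^*)$. Since $P^*$ is optimal this must be an equality, so $P^{\ast\ast}$ is also optimal and contains $P_{i+1}$, closing the induction. \qed
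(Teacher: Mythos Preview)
Your proposal is correct and follows essentially the same route as the paper: both proceed by induction with an exchange argument, pick $v\in P^{\ast}-P_i$ to maximize the depth of $\mathrm{LCA}(u,v)$, derive the path inequality from the greedy choice (via Corollary~\ref{coro-iff}), and then invoke Lemma~\ref{lem-technical} with $X=P^{\ast}-P_i-\{v\}$ to conclude $f(P^{\ast\ast})\lleq f(P^{\ast})$. The only detail you leave implicit is the trivial check that $u,v\notin X$, which the paper states explicitly.
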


\begin{proof}
The proof proceeds by induction on $i$. $P_0 = \emptyset$ is clearly a subset of any optimal solution. Given $P_i \subseteq P^\ast$ for some optimal solution $P^\ast$, we must show that there is an optimal solution $Q^\ast$ for which $P_{i+1} \subseteq Q^\ast$. Clearly, if $P_{i+1} \subseteq P^\ast$, then we are done, since $P^\ast$ is optimal. In the case where $P_{i+1} \not\subseteq P^\ast$ we must exhibit some optimal solution $Q^\ast$ for which $P_{i+1} \subseteq Q^\ast$. Let $u$ be the leaf which was added to $P_i$ to form $P_{i+1}$. Let $v$ be the leaf in $P^\ast - P_{i+1}$ which has the greatest-depth least common ancestor with $u$, where the depth of a node is given by its distance from the root (see Fig. \ref{fig:thm-1}). We set $Q^\ast = (P^\ast - \{v\}) \cup \{u\}$, and claim that $\vec{f}(Q^\ast) \lleq \vec{f}(P^\ast)$. Since $\vec{f}(P^\ast)$ is optimal, and $P_{i+1} \subseteq Q^\ast$ this will complete our proof.

Clearly, $f(a \rightsquigarrow u, P_i) \lleq f(a \rightsquigarrow v, P_i)$, since otherwise $f(r \rightsquigarrow u, P_i) >_L f(r \rightsquigarrow v, P_i)$, implying that $f(P_i \cup \{u\}) >_L f(P_i \cup \{v\})$, contradicting our use of a greedy algorithm.

Note that $u,v \notin (P^\ast - P_i - \{v\})$. Moreover, by choice of $v$, we have that $L(a) \cap (P^\ast - P_i - \{v\}) = \emptyset$, since the only nodes from $P^\ast$ in $L(a)$ must also be in $P_i$. To complete the proof, we apply Lemma \ref{lem-technical}, setting $X = P^\ast - P_i - \{v\}$. This choice of $X$ is made so as to yield the following equalities. 
$$Q^\ast = (P^\ast - \{v\}) \cup \{u\} = P_i \cup (P^\ast - P_i - \{v\}) \cup \{u\}, $$
$$P^\ast = P_i \cup (P^\ast - P_i - \{v\}) \cup \{v\}. $$

By Lemma \ref{lem-technical}, we obtain inequality in the following formula,
$$f(Q^\ast) = f(P_i \cup (P^\ast - P_i - \{v\}) \cup \{u\}) \lleq f(P_i \cup (P^\ast - P_i - \{v\}) \cup \{v\}) = f(P^\ast).$$
Thereby completing the proof.\qed 
\end{proof}

\section{Balanced Placements}

\begin{figure}[t]
		\begin{minipage}[t]{0.48\textwidth}
		\centering
		\includegraphics[scale=0.09]{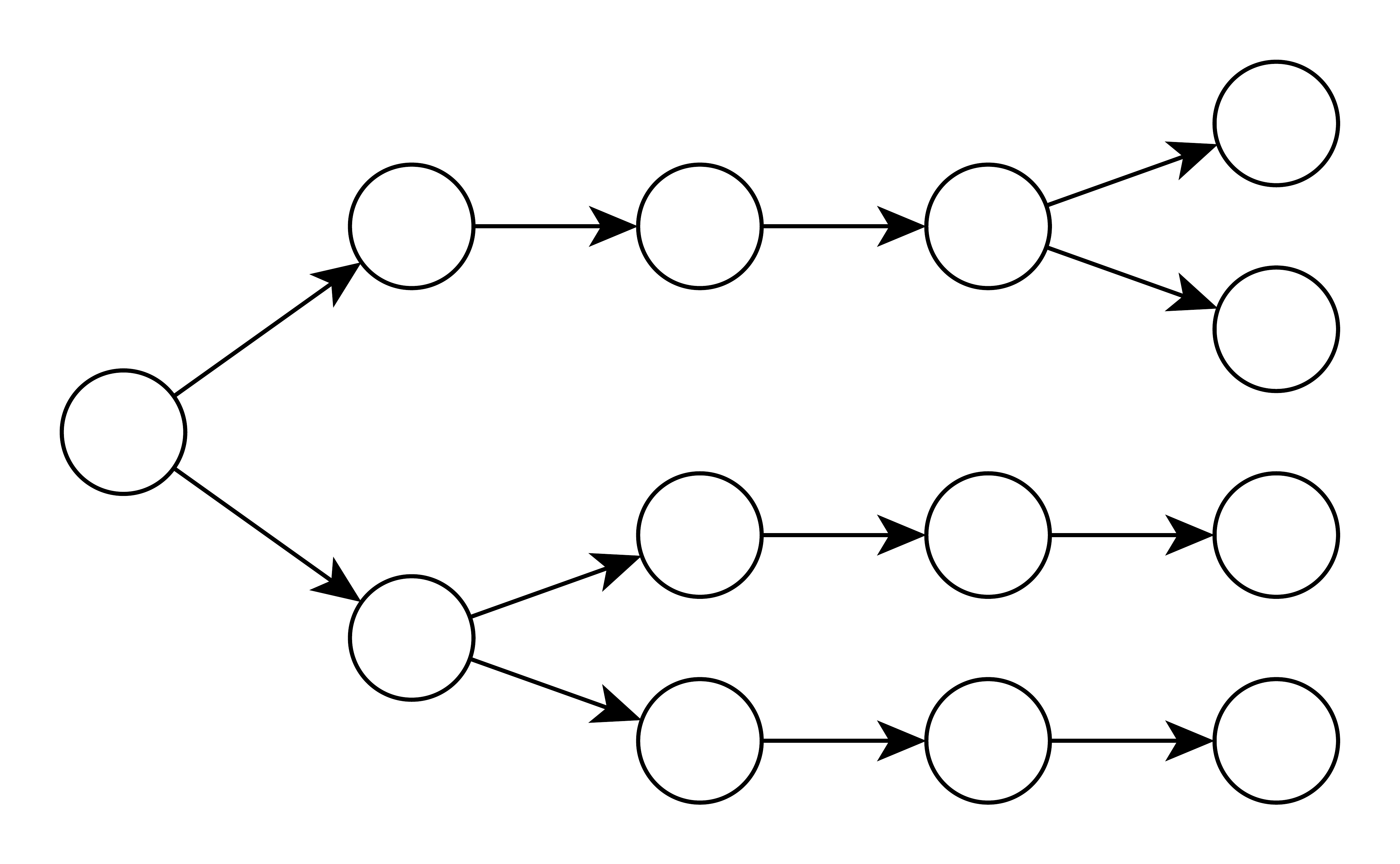}
		\put(0,58){$~~~2$}
		\put(0,40){$1,2$}
		\put(0,23){$1$}
		\put(0,6){$1,2$}
		\caption{Round-robin placement cannot guarantee optimality}\label{fig-cex}
		\end{minipage}
		\begin{minipage}[t]{0.48\textwidth}
		\centering
		\begin{tikzpicture}[scale=0.46, transform shape]
\tikzstyle{every node}=[minimum size=1cm, align=center, font=\LARGE]
\tikzset{>=latex}
\tikzset{snake it/.style={decorate, decoration=snake},
                   postaction={decoration={markings,mark=at position 1 with {\arrow{>}}},decorate}}

\pgfmathsetmacro{\yoffset}{-1.5}
\node (auv)[circle, draw=black] at (0, 0) {$u$};
\node (av)[circle, draw=black] at (1.5, 0.75*\yoffset) {$w$};
\node (au)[circle, draw=black] at (-1.5, 0.75*\yoffset) {$v$};
\node (rv)[circle, draw=black] at (2.25, 2*\yoffset) {$r_w$};
\node (rw)[circle, draw=black] at (-2.25, 2*\yoffset) {$r_v$};
\node (u)[circle, draw=black] at (-3, 3.5*\yoffset) {$q_v$};
\node (v)[circle, draw=black] at (3, 3.5*\yoffset) {$q_w$};

\draw[-] (auv) -- (av);
\draw[-] (auv) -- (au);
\draw[snake it] (au) -- (rw);
\draw[snake it] (av) -- (rv);
\draw[snake it] (rw) -- (u);
\draw[snake it] (rv) -- (v);
\draw[snake it] (auv) -- (1,1);
\draw[->] (v) -> (u) node [midway, fill=white] {swap};
\end{tikzpicture}
		\caption{Nodes used in Theorem \ref {thm-balanced-sufficiency}.} \label{fig-proof-lca}
		\end{minipage}\hfill
\end{figure}

Consider a round-robin placement in which the set of replicas placed at each node is distributed among its children, one replica per child, until all replicas have been placed. This process is then continued recursively at the children. Throughout the process, no child is given more replicas than its subtree has leaf nodes. This method has intuitive appeal, but it does not compute an optimal placement exactly as can be seen from Fig. \ref{fig-cex}. Let placements $P_1$ and $P_2$ consist of the nodes labeled by $1$ and $2$ in Fig. \ref{fig-cex} respectively. Note that both outcomes are round-robin placements. A quick computation reveals that $\vec{f}(P_1) = \langle 1, 1, 7, 0 \rangle \neq \langle 1, 3, 3, 2 \rangle = \vec{f}(P_2)$. Since the placements have different failure aggregates, round-robin placement alone cannot guarantee optimality.

Key to our algorithm is the observation that any placement which lexicominimizes $\vec{f}(P)$ must be \textit{balanced}. If we imagine each child $c_i$ of $u$ as a bin of capacity $\ell_i$, balanced nodes are those in which all unfilled children are approximately ``level'', and no child is filled while children of smaller capacity remain unfilled. These ideas are formalized in the following definitions.

% TODO: Definition needs to be self-contained.
\begin{definition}
Let node $u$ have children indexed $1, ..., k$, and let the subtree rooted at the $i^{th}$ child of node $u$ have $\ell_i$ leaves, and $r_i$ replicas placed on it in placement $P$. A node for which $\ell_i - r_i = 0$ is said to be \emph{filled}. A node for which $\ell_i - r_i > 0$ is said to be \emph{unfilled}.
\end{definition}

\begin{definition}\label{def-balanced}
Node $u$ is said to be \emph{balanced} in placement $P$ iff:
$$	\ell_i - r_i > 0 \implies ~\forall\,j \in \{1,...,k\} ~ (r_i \geq r_j - 1 ) .$$
Placement $P$ is said to be \emph{balanced} if all nodes $v \in V$ are balanced.
\end{definition}

To motivate a proof that lexico-minimum placements must be balanced, consider Fig. \ref{fig-balanced-placement} in which $P_1$ and $P_2$ are sets containing leaf nodes labeled $1$ and $2$ respectively. Fig. \ref{fig-balanced-survivalnums} presents two copies of the same tree, but with failure numbers labeled according to $P_1$ and $P_2$. Upon computing $f(P_1)$ and $f(P_2)$, we find that $f(P_1) = \langle 2, 1, 3, 7 \rangle \lgeq \langle 1, 1, 4, 7 \rangle = f(P_2)$. Note that for placement $P_1$, the root of the tree is unbalanced, therefore $P_1$ is unbalanced. Note also, that $P_2$ is balanced, since each of its nodes are balanced. We invite the reader to verify that $P_2$ is an optimal solution for this tree. 

\begin{table}[tb]
\begin{minipage}[b]{0.33\textwidth}
\centering
\includegraphics[scale=0.075, trim=0 0 0 0, clip=true]{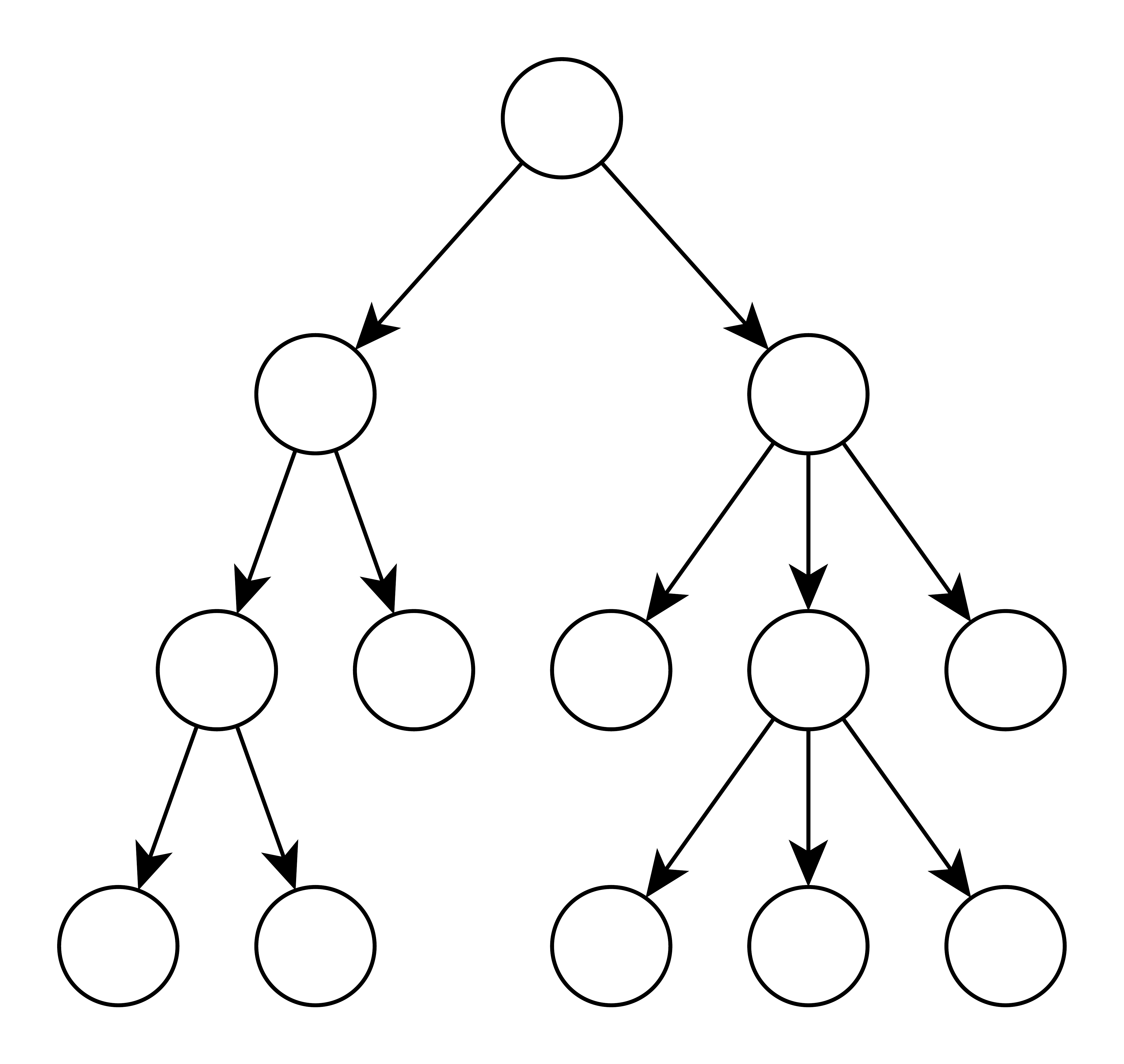}
\put(-88, -3){$1$}
\put(-71, -3){$1$}
\put(-67, 20){$1,2$}
\put(-46, 20){$2$}
\put(-12, 20){$2$}
\vspace{0.25cm}
\captionof{figure}{\strut Placements $P_1, P_2$ \label{fig-balanced-placement}}
\end{minipage}
\begin{minipage}[b]{0.67\textwidth}
\centering
\includegraphics[scale=0.075, trim=0 0 0 0, clip=true]{balanced}
\put(-50.5, 78){$3$} % root 
\put(-71.5, 54){$3$} % depth 1
\put(-29.5, 54){$0$} 
\put(-80, 30.5){$2$} % depth 2
\put(-63, 30.5){$1$}
\put(-46, 30.5){$0$}
\put(-29, 30.5){$0$}
\put(-12, 30.5){$0$}
\put(-88.5, 7){$1$} % depth 3
\put(-71.5, 7){$1$}
\put(-46, 7){$0$}
\put(-29, 7){$0$}
\put(-12, 7){$0$}
\hspace{0.5cm}
\includegraphics[scale=0.075, trim=0 0 0 0, clip=true]{balanced}
\put(-50.5, 78){$3$} % root 
\put(-71.5, 54){$1$} % depth 1
\put(-29.5, 54){$2$} 
\put(-80, 30.5){$0$} % depth 2
\put(-63, 30.5){$1$}
\put(-46, 30.5){$1$}
\put(-29, 30.5){$0$}
\put(-12, 30.5){$1$}
\put(-88.5, 7){$0$} % depth 3
\put(-71.5, 7){$0$}
\put(-46, 7){$0$}
\put(-29, 7){$0$}
\put(-12, 7){$0$}
\vspace{0.25cm}
\captionof{figure}{\strut Failure numbers for $P_1$ \textit{(right)} and $P_2$ \textit{(left)}.\label{fig-balanced-survivalnums}}
\end{minipage}
\end{table}

Our main result is that it is \textit{necessary} for an optimal placement to be balanced. However, the balanced property alone is not sufficient to guarantee optimality. To see this, consider the two placements in Fig. \ref{fig-cex}. By definition, both placements are balanced, yet they have different failure aggregates. Therefore, balancing alone is insufficient to guarantee optimality. Despite this, we can use Theorem \ref{thm-balanced-sufficiency} to justify discarding unbalanced solutions as suboptimal. We exploit this property of optimal placements in our algorithm.

\begin{theorem}\label{thm-balanced-sufficiency}
Any placement $P$ in which $\vec{f}(P)$ is lexicominimum among all placements for a given tree must be balanced.
\end{theorem}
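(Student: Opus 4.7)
I plan to prove the contrapositive: if $P$ is unbalanced then $P$ is not lexicominimum. By negating Definition \ref{def-balanced}, if some node $u$ is unbalanced, then $u$ has children (using the labels of Figure \ref{fig-proof-lca}) $v, w$ with $v$ unfilled ($\ell_v > r_v$) and $r_w \geq r_v + 2$. My strategy is to exhibit a single-leaf swap that moves one replica from $w$'s subtree to $v$'s subtree and to show directly that this strictly decreases $\vec{f}$ in the lex order, contradicting lex-minimality.

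For the swap, I would pick any leaf $q_v$ of $v$'s subtree with $q_v \notin P$ (which exists because $v$ is unfilled) and any leaf $q_w$ of $w$'s subtree with $q_w \in P$ (which exists because $r_w \geq r_v + 2 \geq 2$). Letting $P_0 = P - \{q_w\}$, the swapped placement is $P' = P_0 \cup \{q_v\}$, while $P = P_0 \cup \{q_w\}$, with $q_v, q_w \in C - P_0$. By Lemma \ref{lem-path-ineq}, the goal $\vec{f}(P') <_L \vec{f}(P)$ reduces to the single inequality
$$\vec{f}(r \rightsquigarrow q_v, P_0) \;<_L\; \vec{f}(r \rightsquigarrow q_w, P_0).$$

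The crux is a level-by-level comparison of the two path-aggregates. Both paths share the prefix $r \rightsquigarrow u$, which contributes identically under $P_0$, so the comparison reduces to the branches $v \rightsquigarrow q_v$ and $w \rightsquigarrow q_w$. The key observations are that $f(v, P_0) = r_v$ (removing $q_w$ does not touch $v$'s subtree) and $f(w, P_0) = r_w - 1 \geq r_v + 1$, so a strict gap between the failure numbers at $v$ and $w$ persists in $P_0$. By Property \ref{lem-desc}, every node on $v \rightsquigarrow q_v$ has failure number at most $r_v < r_w - 1$, so the $q_v$-branch contributes zero at every level $\geq r_w - 1$, whereas $w$ itself contributes at least one at level $r_w - 1$ on the $q_w$-branch. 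Hence at the highest level at which the two path-aggregates can possibly differ they actually do differ, with the $q_v$ side strictly smaller, yielding the required strict lex inequality.

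The main obstacle I anticipate is the middle step: verifying that the unbalanced gap $r_w - r_v \geq 2$ really does survive as a strict gap in $P_0$'s failure numbers after removing $q_w$, and invoking Property \ref{lem-desc} to rule out any node on the $v$-branch from reaching level $r_w - 1$. Once these two facts are in place, Lemma \ref{lem-path-ineq} mechanically delivers $\vec{f}(P') <_L \vec{f}(P)$, contradicting lex-minimality of $P$ and proving the theorem.
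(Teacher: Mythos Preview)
Your argument is correct. You and the paper use the identical swap $P' = (P - \{q_w\}) \cup \{q_v\}$ and the same combinatorial core: the gap $r_w - r_v \geq 2$ together with Property~\ref{lem-desc} forces every node on the $v$-branch to have failure number strictly below $f(w,\cdot)$, so at that level the $q_w$-side counts something while the $q_v$-side counts nothing. Where you diverge is in packaging. The paper works directly with $P$ and $P^\ast$, sets $m = f(w,P) = r_w$, and shows $p_m^\ast < p_m$ with $p_k^\ast = p_k$ for $k>m$ by explicitly tracking the sets $S^-,S^+$ of nodes whose failure number moves to or from $m$. You instead strip off $q_w$ first to form $P_0$, express both $P$ and $P'$ as $P_0 \cup \{\cdot\}$, and invoke Lemma~\ref{lem-path-ineq} to reduce everything to a lex comparison of the two root-to-leaf path aggregates under $P_0$; the pivotal level is then $r_w - 1 = f(w,P_0)$. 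Your route is arguably tidier since it reuses machinery already built for the greedy proof, and it makes the role of the shared prefix $r \rightsquigarrow u$ explicit (implicitly using the translation-invariance of Lemma~\ref{lem-logroup} to cancel it). The paper's route is self-contained and does not depend on Lemma~\ref{lem-path-ineq}. Both reach the same strict inequality $\vec{f}(P') <_L \vec{f}(P)$.
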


\begin{proof}
Suppose $P$ is not balanced, yet $\vec{f}(P)$ is lexicominimum among all placements $P$. We proceed to a contradiction, as follows.

Let $u$ be an unbalanced node in $T$. Let $v$ be an unfilled child of $u$, and let $w$ be a child of $u$ with at least one replica such that $r_v < r_w - 1$. Since $v$ is unfilled, we can take one of the replicas placed on $w$ and place it on $v$. Let $q_w$ be the leaf node from which this replica is taken, and let $q_v$ be the leaf node on which this replica is placed (see Fig. \ref{fig-proof-lca}). Let \mbox{$P^\ast := (P - \{q_w\}) \cup \{q_v\}$}. We aim to show that $P^\ast$ is more optimal than $P$, contradicting $P$ as a lexicominimum.

Let $\vec{f}(P) := \langle p_\repfact, ..., p_0 \rangle$, and $\vec{f}(P^\ast) := \langle p^\ast_\repfact, ..., p^\ast_0 \rangle$. For convenience, we let \mbox{$f(w, P) = m$}. To show that $\vec{f}(P^\ast) <_L \vec{f}(P)$, we aim to prove that $p^\ast_m < p_m$, and that for any $k$ with $\repfact \geq k > m$, that $p^\ast_k = p_k$. We will concentrate on proving the former, and afterwards show that the latter follows easily.

To prove $p^\ast_m < p_m$, observe that as a result of the swap, some nodes change failure number. These nodes all lie on the paths $v \rightsquigarrow q_v$ and $w \rightsquigarrow q_w$. Let $S^-$ (resp. $S^+$) be the set of nodes whose failure numbers change to $m$ (resp. change from $m$), as a result of the swap. Formally, we define
$$S^- := \{x \in V \mid f(x, P)  = m,    f(x, P^\ast) \neq m  \}, $$
$$S^+ := \{x \in V \mid f(x, P) \neq m , f(x, P^\ast) = m \}.$$
By definition, $p^\ast_m = p_m - |S^-| + |S^+|$. We claim that $|S^-| \geq 1$ and $|S^+| = 0$, which yields $p^\ast_m < p_m$. To show $|S^-| \geq 1$, note that $f(w, P) = m$ by definition, and after the swap, the failure number of $w$ changes. Therefore, $|S^-| \geq 1$.

To show $|S^+| = 0$, we must prove that no node whose failure number is affected by the swap has failure number $m$ after the swap has occured. We choose to show a stronger result, that all such node's failure number must be strictly less than $m$. Let $s_v$ be an arbitrary node on the path $v \rightsquigarrow q_v$, and consider the failure number of $s_v$. As a result of the swap, one more replica is counted as failed in each node on this path, therefore $f(s_v, P^\ast) = f(s_v, P) + 1$. Likewise, let $s_w$ be an arbitrary node on path $w \rightsquigarrow q_w$. One less replica is counted as failed in each node on this path, so $f(s_w, P^\ast) = f(s_w, P) - 1$. We will show that $f(s_w, P^\ast) < m$, and $f(s_v, P^\ast) < m$.

First, note that for any $s_w$, by Property \ref{lem-desc} $f(s_w, P^\ast) \leq f(w, P^\ast) = m-1 < m$. Therefore, $f(s_w, P^\ast) < m$, as desired.

To show $f(s_v, P^\ast) < m$, note that by supposition $r_w - 1 > r_v$, and from this we immediately obtain $f(w, P) - 1 > f(v,P)$ by the definition of failure number. Now consider the nodes $s_v$, for which 
$$f(s_v, P) \leq f(v,P) < f(w,P) - 1 = m - 1 \implies  f(s_v, P^\ast) - 1 < m - 1,$$
Where the first inequality is an application of Property \ref{lem-desc}, and the implication follows by substitution. Therefore $f(s_v, P^\ast) < m$ as desired.

Therefore, among all nodes in $P^\ast$ whose failure numbers change as a result of the swap, no node has failure number $m$, so $|S^+| = 0$ as claimed. Moreover, since $f(s, P^\ast) < m$ for any node $s$ whose failure number changes as a result of the swap, we also have proven that $p_k = p^\ast_k$ for all $k$ where $\repfact \geq k > m$. This completes the proof. \qed
\end{proof}

\section{An $O(n\repfact)$ Algorithm}
Our algorithm considers only placements which are balanced. To place $\repfact$ replicas, we start by placing $\repfact$ replicas at the root of the tree, and then proceed to assign these replicas to children of the root. We then recursively carry out the same procedure on each of the children. 

Before the recursive procedure begins, we obtain values of $\ell_i$ at each node by running breadth-first search as a preprocessing phase.
The recursive procedure is then executed in two consecutive phases.
During the \textit{divide} phase, the algorithm is tasked with allocating $r(u)$ replicas placed on node $u$ to the children of $u$. After the divide phase, some child nodes are filled, while others remain unfilled. To achieve balance, each unfilled child $c_i$ will have either $r(c_i)$ or $r(c_i) - 1$ replicas placed upon them. The value of $r(c_i)$ is computed for each $c_i$ as part of the divide phase. The algorithm is then recursively called on each unfilled node to obtain values of optimal solutions for their subtrees. Nodes which are filled require no further processing. The output of this call is a pair of two optimal failure aggregates, one supposing $r(c_i)$ replicas are placed at $c_i$, the other supposing $r(c_i) -1$ are placed. Given these failure aggregates obtained from each child, the \textit{conquer} phase then chooses whether to place $r(c_i)$ or $r(c_i) - 1$ replicas on each unfilled child so as to achieve a lexicominimum failure aggregate for node $u$ overall. For ease of exposition, we describe an $O(n\repfact)$ version of our algorithm in this section, and prove it correct. In Section \ref{sec-improvements} then discuss improvements which can be used to obtain an $O(n + \repfact^2)$ algorithm. Finally, we describe some tree transformations which can be used to obtain an $O(n + \repfact \log \repfact)$ algorithm in Section \ref{sec-best}.

\subsection{Divide Phase}\label{sec-divide}

When node $u$ is first considered, it receives at most two possible values for the number of replicas it could be asked to accommodate. Let these be the values $r(u)$ and $r(u) - 1$. Let $u$ have a list of children indexed $1,2,..., m$, with leaf capacities $\ell_i$ where $1 \leq i \leq m$. The divide phase determines which children will be filled and which will be unfilled. Filled children will have $\ell_i$ replicas placed on them in the optimal solution, while the number of replicas on the unfilled children is determined during the conquer phase.

The set of unfilled children can be determined (without sorting) in an iterative manner using an $O(m)$ time algorithm similar to that for the Fractional Knapsack problem. The main idea of the algorithm is as follows: in each iteration, at least one-half of the children whose status is currently unknown are assigned a filled/unfilled status. To determine which half, the median capacity child (with capacity $\ell_{med}$) is found using the selection algorithm. Based upon the number of replicas that have not been assigned to the filled nodes, either \begin{inparaenum}[a)]\item the set of children $c_i$ with $\ell_i \geq \ell_{med}$ are labeled as ``unfilled" or \item the set of children $c_i$ with $\ell_i \leq \ell_{med}$ are labeled as ``filled"\end{inparaenum}. The algorithm recurses on the remaining unlabeled children. Pseudocode for this algorithm can be found in Algorithm \ref{alg-get-filled}

We briefly sketch the correctness of Algorithm 1. The following invariant holds after every execution of the while loop: 
\begin{equation*}
\max(F)\cdot(|U| + |M|) < r - \sum_{c_i \in F} \ell_i \leq \min(U) \cdot |U| + \sum_{c_i \in M} \ell_i.
\end{equation*}
When $U = \emptyset$ or $F = \emptyset$ the invariant is not well-defined. These conditions are easy to test for: $U = \emptyset$ if and only if $\sum \ell_i = r(u)$, and $F = \emptyset$ if and only if $\ell_i > \floorfrac{r(u)}{|M|}$ for all $i$. Hence in what follows, we will work only with cases where $U \neq \emptyset$ and $F \neq \emptyset$. At the end of the algorithm, $M = \emptyset$, and the invariant reduces to the following
\begin{equation}\label{eqn-invariant-reduced}
\max(F) < \frac{r - \sum_{c_i \in F} \ell_i}{|U|} \leq \min(U).
\end{equation}
Equation \ref{eqn-invariant-reduced} indicates that the average number of replicas placed on the unfilled nodes lies between the maximum value of $F$ and the minimum value of $U$. From this, it is easy to see that the labeling is correct. Suppose that some filled child $c_i \in F$ has been incorrectly classified. This child contains at most $\ell_i - 1$ replicas, and yet is still unfilled. Moreover, to attain the average, some unfilled child must be assigned at least $\ceilfrac{r - \sum_{c_i \in F}^{\ell_i}}{|U|}$ replicas. Taking the difference of the number of replicas assigned to these two unfilled nodes, we have
\begin{align*}
 & \Big\lceil\dfrac{r - \sum_{c_i \in F}\ell_i}{|U|}\Big\rceil - \ell_i + 1 \\
 >~~ & \Big\lceil\dfrac{r - \sum_{c_i \in F}\ell_i}{|U|}\Big\rceil - \max(F) + 1 \\
 \geq~~ & \Big\lceil\dfrac{r - \sum_{c_i \in F}\ell_i}{|U|}\Big\rceil - \max(F) + 2 \geq 2
\end{align*} 
which is a violation of the balanced placement property. Therefore, all replicas are correctly classified. This completes the proof sketch.

\begin{algorithm}[t]
\SetKwProg{Fn}{Function}{begin}{end}
\Fn{\getFilled{$M$, $r$}}{
	$F \gets \emptyset$ ; $U \gets \emptyset$ \tcp*[r]{$F$ := filled children $U$ := unfilled children}
	\While{$M \neq \emptyset$}{
		$\ell_{med} \gets \text{ median capacity of children in } M $ \;
		$M_1 \gets \{c_i \in M \mid \ell_i < \ell_{med}  \} $ \;
		$M_2 \gets \{c_i \in M \mid \ell_i = \ell_{med}  \} $ \;
		$M_3 \gets \{c_i \in M \mid \ell_i > \ell_{med}  \} $ \;
		$x \gets r - \sum_{c_i \in F \cup M_1 \cup M_2} \ell_i$ \tcp*[r]{$x$ to be distributed among $M_3 \cup U$} 
		\uIf(\tcp*[f]{$M_1 \cup M_2$ guaranteed filled}){$x \geq \ell_{med} \cdot (|U| + |M_3|)$}{
			$F \gets F\cup M_1 \cup M_2$ \; $M \gets M - (M_1 \cup M_2)$ \;
		}\Else(\tcp*[f]{$M_2 \cup M_3$ guaranteed unfilled}) {
			$U \gets F\cup M_2 \cup M_3$ \; $M \gets M - (M_2 \cup M_3)$ \;
		}
	}
	\Return{($F$, $U$)} \tcp*[r]{return filled and unfilled children}
}
\caption{Determines filled and unfilled nodes}\label{alg-get-filled}
\end{algorithm}

Suppose we know that we only need to find placements of size $r(u)$ and $r(u) - 1$ for node $u$. Moreover, we know that in an optimal placement of size $r(u)$, each child $c_i$ only needs to accomodate either $r(c_i)$ or $r(c_i) - 1$ replicas. Suppose that optimal placements of size $r(c_i)$ and $r(c_i) - 1$ are available at each child $c_i$. Theorem \ref{thm-two-values} shows that these placements are all that is required to compute optimal placements of size $r(u)$ \emph{and also of size} $r(u) - 1$.

\begin{theorem}\label{thm-two-values}
In any case where $r(u)$ or $r(u) - 1$ replicas must be balanced among $k$ unfilled children, it suffices to consider placing either $\ceilfrac{r(u) - L}{k}$ or $\floorfrac{r(u)- L -1}{k}$ replicas at each unfilled child.
\end{theorem}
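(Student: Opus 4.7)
The plan is to invoke the balanced property (Theorem \ref{thm-balanced-sufficiency}) to restrict attention to placements in which the replica counts among the $k$ unfilled children of $u$ differ pairwise by at most one, and then verify by a short arithmetic case analysis on the remainder that the two claimed values exhaust the possibilities for both the $r(u)$-replica and $(r(u)-1)$-replica subproblems simultaneously.

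First I would set shorthand $n := r(u) - L$, where $L := \sum_{c_i \in F} \ell_i$ is the number of replicas consumed by the filled children, and write $n = qk + s$ with $0 \le s < k$. By Theorem \ref{thm-balanced-sufficiency}, in any lexicominimum placement the per-child count among the $k$ unfilled children must come from $\{\lfloor n/k \rfloor, \lceil n/k \rceil\}$ when $n$ replicas are to be distributed, and from $\{\lfloor (n-1)/k \rfloor, \lceil (n-1)/k \rceil\}$ when $n-1$ replicas are to be distributed. The whole task then reduces to showing that both of these candidate sets are contained in the single two-element set $\{H, \ell\}$, where $H := \lceil n/k \rceil$ and $\ell := \lfloor (n-1)/k \rfloor$.

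The verification splits into three subcases on $s$. When $s = 0$ we have $H = q$ and $\ell = q-1$, while the $n$-replica candidate set degenerates to $\{q\}$ and the $(n-1)$-replica set is $\{q, q-1\}$. When $s = 1$ we have $H = q+1$ and $\ell = q$, with candidate sets $\{q+1, q\}$ and $\{q\}$ respectively. When $s \ge 2$ we have $H = q+1$ and $\ell = q$, with both candidate sets equal to $\{q+1, q\}$. In every subcase the two values $H$ and $\ell$ suffice, and additionally $H - \ell = 1$, so the conquer phase has only to choose between two adjacent integers for each unfilled child, matching the form of the statement.

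The only delicate point, which I would highlight, is the edge case $s = 0$ (equivalently $k \mid r(u) - L$): a naive choice of the two candidates as $\lceil n/k \rceil$ and $\lfloor n/k \rfloor$ would collapse them both to $q$ and thereby miss the value $q-1$ that legitimately appears in an optimal balanced distribution of $r(u) - 1$ replicas. The theorem's offset by one inside the floor, \emph{i.e.}\ $\lfloor (r(u) - L - 1)/k \rfloor$ rather than $\lfloor (r(u) - L)/k \rfloor$, is precisely the correction that produces $q-1$ in this boundary case while leaving the other two subcases unchanged, and this is the main observation driving the choice of formulas.
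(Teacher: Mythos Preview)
Your proposal is correct and follows essentially the same route as the paper's own proof: a case analysis on the remainder of $r(u)-L$ modulo $k$, with special attention to the divisibility case that forces the $-1$ offset inside the floor. The paper merges your $s=1$ and $s\ge 2$ subcases into a single ``$s\bmod k>0$'' case, but otherwise the arguments coincide.
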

\begin{proof}
Let $s := r(u) - L$. Suppose $s \bmod k = 0$. If $s$ replicas are placed at $u$, then all unfilled children receive exactly $\frac{s}{k} ~(= \ceilfrac{s}{k})$ replicas. If $s - 1$ replicas are placed at $u$, one child gets $\frac{s}{k} - 1 = \floorfrac{s - 1}{k}$ replicas. If instead $s \bmod k > 0$, then the average number of replicas on each unfilled child is $\frac{s}{k} \notin \ints$. To attain this average using integer values, values both above and below $\frac{s}{k}$ are needed. However, since the unfilled children must be balanced, whatever values selected must have absolute difference at most 1. The only two integer values satisfying these requirements are $\ceilfrac{s}{k}$ and $\floorfrac{s}{k}$. But $\floorfrac{s}{k} = \floorfrac{s - 1}{k}$ when \mbox{$s \bmod k > 0$}. \qed
\end{proof}

\subsection{Conquer Phase}\label{sec-conquer}

Once the recursive call completes, we combine the results from each of the children to achieve the lexicographic minimum overall. Our task in this phase is to select $(r(u) - L) \bmod k$ unfilled children on which $\ceilfrac{r(u)-L}{k}$ replicas will be placed, and place $\floorfrac{r(u) - L - 1}{k}$ replicas on the remaining unfilled children. We need to do this in such a way that the resulting placement is lexicominimum. Recall also that we must return two values, one for $r(u)$ and another for $r(u) - 1$. We show how to obtain a solution in the $r(u) - 1$ case using a greedy algorithm. A solution for $r(u)$ can easily be obtained thereafter.
In this section, when two vectors are compared or summed, we are implicitly making use of an $O(\repfact)$ function for comparing two vectors of length $\repfact$ in lexicographic order.

Let $\vec{a}_i$ (respectively $\vec{b}_i$) represent the lexicominimum value of $\vec{f}(P)$ where $P$ is any placement of $\floorfrac{r(u)- L -1}{k}$ (respectively $\ceilfrac{r(u) - L}{k}$) replicas on child $i$. Recall that $\vec{a}_i, \vec{b}_i \in \natnum^{\repfact + 1}$, and are available as the result of the recursive call. We solve the optimization problem by encoding the decision to take $\vec{b}_i$ over $\vec{a}_i$ as a decision variable $x_i \in \{0,1\}$, for which either $x_i = 0$ if $\vec{a}_i$ is selected, or $x_i = 1$ if $\vec{b}_i$ is selected. The problem can then be described as an assignment of values to $x_i$ according to the following system of constraints, in which all arithmetic operations are performed point-wise.
\begin{equation}\label{eqn-greedy-constraints}
\min \displaystyle\sum_{i} \vec{a}_i + (\vec{b}_i - \vec{a}_i)x_i, ~~~ \text{subj. to: } \displaystyle\sum_{i}x_i = (r(u) - L) \bmod k.
\end{equation}

An assignment of $x_i$ which satisfies the requirements in (\ref{eqn-greedy-constraints}) can be found by computing $\vec{b}_i - \vec{a}_i$ for all $i$, and greedily assigning $x_i = 1$ to those $i$ which have the $(r(u) - L) \bmod k$ smallest values of $\vec{b}_i - \vec{a}_i$. This is formally stated as 

\begin{theorem}\label{thm-greedy-system}
Let $\pi := (\pi_1,\pi_2,...,\pi_k)$ be a permutation of $\{1,2,...,k\}$ such that:
$$\vec{b}_{\pi_1} - \vec{a}_{\pi_1} \lleq \vec{b}_{\pi_2} - \vec{a}_{\pi_2} \lleq ... \lleq \vec{b}_{\pi_k} - \vec{a}_{\pi_k}~.$$
If vector $\vec{x} = \langle x_1, ..., x_k\rangle$ is defined according to the following rules: set $x_{\pi_i} = 1$ iff \mbox{$i < (r(u) - L) \bmod k$}, else $x_{\pi_i} = 0$, then $\vec{x}$ is an optimal solution to (\ref{eqn-greedy-constraints}).
\end{theorem}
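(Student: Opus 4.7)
The plan is a classical exchange argument adapted to the lexicographic order. First observe that the term $\sum_i \vec{a}_i$ in (\ref{eqn-greedy-constraints}) is a constant independent of $\vec{x}$, so minimizing (\ref{eqn-greedy-constraints}) is equivalent to lex-minimizing $\sum_{i \in S}(\vec{b}_i - \vec{a}_i)$ over all subsets $S \subseteq \{1,\dots,k\}$ of size $m := (r(u) - L) \bmod k$, where $S$ records the indices at which $x_i = 1$. The prescribed solution $\vec{x}$ corresponds to the set $S_x$ consisting of the $m$ indices appearing earliest in the $\pi$-ordering, i.e., those with the $m$ lex-smallest values of $\vec{b}_i - \vec{a}_i$.

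The key auxiliary lemma I would establish first is the additive compatibility of $\lleq$: for integer vectors $\vec{u}, \vec{v}, \vec{w}$ of the same length, $\vec{u} \lleq \vec{v}$ implies $\vec{u} + \vec{w} \lleq \vec{v} + \vec{w}$. This is immediate from the definition of $\lleq$, since the witness index and the associated pointwise equalities and inequality are preserved under componentwise addition. A simple corollary is $\vec{u} \lleq \vec{v}$ iff $\vec{u} - \vec{w} \lleq \vec{v} - \vec{w}$, so lex order behaves well under translation on $\ints^{\repfact + 1}$.

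The exchange step then proceeds as follows. Let $\vec{y}$ be an arbitrary feasible solution with associated index set $S_y$. If $S_y = S_x$ we are done; otherwise pick $i \in S_x \setminus S_y$ and $j \in S_y \setminus S_x$. By construction of $S_x$, $i$ precedes $j$ in the $\pi$-ordering, so $\vec{b}_i - \vec{a}_i \lleq \vec{b}_j - \vec{a}_j$. Form $\vec{y}'$ from $\vec{y}$ by setting $y'_i = 1$ and $y'_j = 0$. Then $\vec{y}'$ is still feasible, $|S_{y'} \triangle S_x| = |S_y \triangle S_x| - 2$, and the objective at $\vec{y}'$ differs from that at $\vec{y}$ by $(\vec{b}_i - \vec{a}_i) - (\vec{b}_j - \vec{a}_j)$; the additive compatibility lemma then implies the new objective is lex-at-most the old. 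Iterating this swap at most $k$ times yields $\vec{x}$ without ever increasing the objective, establishing its optimality.

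I do not anticipate a serious obstacle. The only delicate points are a careful statement and verification of the additive compatibility lemma for vectors whose entries may be negative (the lex order extends verbatim from $\natnum^{\repfact+1}$ to $\ints^{\repfact+1}$), and the degenerate boundary case $m = 0$, in which $\vec{x} = \vec{0}$ is the unique feasible solution and the theorem holds trivially.
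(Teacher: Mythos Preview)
Your proposal is correct and follows essentially the same route as the paper's own proof: reduce to lex-minimizing $\sum_{i\in S}(\vec{b}_i-\vec{a}_i)$ over size-$m$ subsets, invoke translation-invariance of $\lleq$ on $\ints^{\repfact+1}$ (the paper isolates this as Lemma~\ref{lem-logroup}), and run an exchange argument swapping an index of $S_y\setminus S_x$ for one of $S_x\setminus S_y$ until $S_x$ is reached. The only cosmetic difference is that the paper phrases the exchange as starting from a hypothetical strictly better $S'$ and improving it toward $S$, whereas you start from an arbitrary feasible $S_y$; the underlying inequality and induction are identical.
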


The following Lemma greatly simplifies the proof of Theorem \ref{thm-greedy-system}. 

\begin{lemma}\label{lem-logroup}
$\langle\ints^n, +\rangle$ forms a linearly-ordered group under $\lleq$. In particular, for any $\vec{x}, \vec{y}, \vec{z} \in \ints^n, \vec{x} \lleq \vec{y} \implies \vec{x} + \vec{z} \lleq \vec{y} + \vec{z}$.
\end{lemma}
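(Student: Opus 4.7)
The plan is to verify the three structural requirements of a linearly ordered group in turn, with the translation-invariance clause in the ``in particular'' falling out essentially for free from the verification of the order axioms. The first two parts are largely bookkeeping; the third (translation invariance) is the part the subsequent proof of Theorem~\ref{thm-greedy-system} actually consumes, so I would write it as the featured step.

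First, I would observe that $(\ints^n, +)$ under coordinatewise addition is a group in the entirely standard way: associativity and commutativity hold coordinatewise, the identity is the zero vector, and the inverse of $\vec{x}$ is $-\vec{x}$. No use of the order is needed here. Next, I would verify that $\lleq$ is a linear order. Reflexivity is immediate by taking the witnessing index to correspond to the trivial case $\vec{x} = \vec{x}$; antisymmetry and totality reduce to the corresponding properties of $\leq$ on $\ints$ applied at the highest-indexed coordinate at which the two vectors disagree (or to equality if there is no such coordinate); transitivity is a routine case analysis on the relative positions of the two witnessing indices, again using transitivity of $\leq$ on $\ints$ when the indices coincide. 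These are textbook facts about lexicographic orders over a linearly ordered set, so I would state them and move on without belaboring them.

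The substantive step is translation invariance. Suppose $\vec{x} \lleq \vec{y}$ and let $\vec{z} \in \ints^n$ be arbitrary. If $\vec{x} = \vec{y}$, then $\vec{x} + \vec{z} = \vec{y} + \vec{z}$ and we are done. Otherwise, by definition of $\lleq$ there is an index $m$ with $x_m < y_m$ and $x_i = y_i$ for every $i > m$. Since addition is coordinatewise, at each $i > m$ we still have $x_i + z_i = y_i + z_i$, and at index $m$ we have $x_m + z_m < y_m + z_m$. The same witness $m$ therefore certifies $\vec{x} + \vec{z} \lleq \vec{y} + \vec{z}$ via the definition.

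I do not expect any real obstacle: the entire argument is a direct unwinding of the definition of lexicographic order, using that componentwise addition by a common vector preserves both equality and strict inequality at each coordinate. The only minor care needed is to record that the witnessing index $m$ is preserved under the translation, which is immediate from the coordinatewise nature of $+$.
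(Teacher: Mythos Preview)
Your proposal is correct and essentially identical to the paper's own proof: both dispatch the group structure as standard and then prove translation invariance by observing that the witnessing index $m$ in the definition of $\lleq$ is preserved under coordinatewise addition of a common vector $\vec{z}$. You are slightly more thorough in explicitly mentioning the verification that $\lleq$ is a linear order, which the paper simply takes for granted, but the core argument is the same.
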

A straight-forward proof of Lemma \ref{lem-logroup} can be found in the appendix.

\begin{proof}[Proof of Theorem \ref{thm-greedy-system}]
First, notice that a solution to (\ref{eqn-greedy-constraints}) which minimizes the quantity $\sum_i (\vec{b}_i - \vec{a}_i) x_i$ also minimizes the quantity $\sum_i \vec{a}_i + (\vec{b}_i - \vec{a}_i)x_i.$ It suffices to minimize the former quantity, which can be done by considering only those values of $(\vec{b}_i - \vec{a}_i)$ for which $x_i = 1$. For convenience, we consider $\vec{x}$ to be the characteristic vector of a set $S \subseteq \{1,...,k\}$. We show that no other set $S'$ can yield a characteristic vector $\vec{x}'$ which is strictly better than $\vec{x}$ as follows.

Let $\alpha := (r(u) - L) \bmod k$, and let $S := \{\pi_1, ..., \pi_{\alpha - 1} \}$ be the first $\alpha - 1$ entries of $\pi$ taken as a set. Suppose that there is some $S'$ which represents a feasible assignment of variables to $\vec{x}'$ for which $\vec{x}'$ is a strictly better solution than $\vec{x}$. $S' \subseteq \{1, ..., k\}$, such that $|S'| = \alpha - 1$, and $S' \neq S$. Since $S' \neq S$, and $|S'| = |S|$ we have that $S - S' \neq \emptyset$ and $S' - S \neq \emptyset$. Let $i \in S-S'$ and $j \in S' - S$. We claim that we can form a better placement, $S^\ast = (S' - \{j\}) \cup \{i\}$. Specifically,
\begin{equation}\label{eqn-claim01}
\sum_{\ell \in S^\ast} (\vec{b}_\ell - \vec{a}_\ell) \lleq \sum_{m \in S'} (\vec{b}_m - \vec{a}_m)~.
\end{equation}
which implies that replacing a single element in $S'$ with one from $S$ does not cause the quantity minimized in (\ref{eqn-greedy-constraints}) to increase.

To prove (\ref{eqn-claim01}) note that \mbox{$j \notin S$ and $i \in S \implies (\vec{b}_i - \vec{a}_i) \lleq (\vec{b}_j - \vec{a}_j)$.} We now apply Lemma \ref{lem-logroup}, setting $\vec{x} = (\vec{b}_i - \vec{a}_i)$, $\vec{y} = (\vec{b}_j - \vec{a}_j)$, and \mbox{$\vec{z} = \sum_{\ell \in (S^\ast - \{i\})} (\vec{b}_\ell - \vec{a}_\ell)$}. This yields
$$\sum_{\ell \in (S^\ast - \{i\})} (\vec{b}_\ell - \vec{a}_\ell) + (\vec{b}_i - \vec{a}_i) \lleq 
\sum_{\ell \in (S^\ast - \{i\})} (\vec{b}_\ell - \vec{a}_\ell) + (\vec{b}_j - \vec{a}_j)~.$$
But since $S^\ast - \{i\} = S' - \{j\}$, we have that
\begin{equation}\label{eqn-claim02}
\sum_{\ell \in (S^\ast - \{i\})} (\vec{b}_\ell - \vec{a}_\ell) + (\vec{b}_i - \vec{a}_i) \lleq 
\sum_{m \in (S' - \{j\})} (\vec{b}_m - \vec{a}_m) + (\vec{b}_j - \vec{a}_j)~.
\end{equation}
Clearly, (\ref{eqn-claim02}) $\implies$ (\ref{eqn-claim01}), thereby proving (\ref{eqn-claim01}).
 This shows that any solution which is not $S$ can be modified to swap in one extra member of $S$ without increasing the quantity minimized in (\ref{eqn-greedy-constraints}). By induction, it is possible to include every element from $S$, until $S$ itself is reached. Therefore, $\vec{x}$ is an optimal solution to (\ref{eqn-greedy-constraints}). \qed
\end{proof}

In the algorithm, we find an optimal solution to (\ref{eqn-greedy-constraints}) by assigning $\ceilfrac{r(u) - L - 1}{k}$ replicas to those children where $i$ is such that \mbox{$1 \leq i < (r(u) - L) \bmod k$}, and $\floorfrac{r(u) - L}{k}$ replicas to those remaining. To do this, we find the unfilled child having the $((r(u) - L) \bmod k)^{th}$ largest value of $\vec{b}_i - \vec{a}_i$ using linear-time selection, and use the partition procedure from quicksort to find those children having values below the selected child. This takes time $O(k\repfact)$ at each node.

At the end of the conquer phase, we compute and return the sum\footnote{In the mentioned sum we assume for notational convenience, that the vectors have been indexed in increasing order of $\vec{b}_i - \vec{a}_i$, although the algorithm performs no such sorting.}
\begin{equation}\label{eqn-recursive-sum}
\sum_{i \,<\, (r(u) - L )\bmod k}\vec{b}_i + \sum_{i \,\geq\, (r(u) - L) \bmod k} \vec{a}_i  + \sum_{j \,:\, \text{filled}} \vec{f}(P_j) + \vec{1}_{r(u)-1},
\end{equation} where $P_j$ is the placement of $\ell_j$ replicas on child $j$ and $\vec{1}_{r(u)-1}$ is a vector of length $\rho$ having a one in entry $r(u)-1$ and zeroes everywhere else. The term $\vec{1}_{r(u)-1}$ accounts for the failure number of $u$. This sum gives the value of an optimal placement of size $r(u) - 1$. Note there are $k+1$ terms in the sum, each of which is a vector of length at most $\rho + 1$. Both computing the sum and performing the selection take $O(k\repfact)$ time at each node, yielding $O(n\repfact)$ time overall.

We have only focused upon computing the \textit{value} of the optimal solution. The solution itself can be recovered easily by storing the decisions made during the conquer phase at each node, and then combining them to output an optimal placement.

\section{An $O(n + \repfact^2)$ Algorithm}
\label{sec-improvements}

An $O(n + \repfact^2)$ running time can be achieved by an $O(n)$ divide phase, and an $O(\repfact^2)$ conquer phase. The divide phase already takes at most $O(n)$ time overall, so to achieve our goal, we concern ourselves with optimizing the conquer phase. The conquer phase can be improved upon by making two changes. First, we modify the vector representation used for return values. Second, we transform the structure of the tree to avoid pathological cases. 

In the remainder of the paper, we will use array notation to refer to entries of vectors. For a vector $\vec{v}$, the $k^{th}$ entry of $\vec{v}$ is denoted $\vec{v}[k]$.

\subsubsection{Compact Vector Representation}
Observe that the maximum failure number returned from child $c_i$ is $r(c_i)$. This along with Property \ref{lem-desc} implies that the vector returned from $c_i$ will have a zero in indices $\repfact, \repfact-1, ..., r(c_i) +1$. To avoid wasting space, we modify the algorithm to return vectors of length only $r(c_i)$.  At each node, we then compute (\ref{eqn-recursive-sum}) by summing entries in increasing order of their index. Specifically, to compute $\vec{v}_1 + \vec{v}_2 + ... + \vec{v}_k$, where each vector $\vec{v}_j$ has length $r(c_i)$, we first allocate an empty vector $\vec{w}$, of size $r(c_i)$, to store the result of the sum. Then, for each vector $\vec{v}_j$, we set $\vec{w}[i] \gets \vec{w}[i] + \vec{v}_j[i]$ for indices $i$ from $0$ up to $r(c_i)$. After all vectors have been processed, $\vec{w} = \vec{v}_1 + ... + \vec{v}_k$. This algorithm takes \mbox{$r(c_1) + ... + r(c_k) = O(r(u))$} time. 
Using smaller vectors also implies that the $((r(u) - L) \bmod k)^{th}$ best child is found in $O(r(u))$ time, since each unfilled child returns a vector of size at most $O(\frac{r(u)}{k})$, and there are only $k$ unfilled children to compare. 
With these modifications the conquer phase takes $O(r(u))$ time at node $u$.

\subsubsection{Tree Transformations}
Note that for each $i$, nodes at depth $i$ have $O(\repfact)$ replicas placed on them in total. We can therefore achieve an $O(\repfact^2)$ time conquer phase overall by ensuring that the conquer phase only needs to occur in at most $O(\repfact)$ levels of the tree. To do this, we observe that when $r(u) = 1$, any leaf with minimum depth forms an optimal placement. Recursive calls can therefore be stopped once $r(u) = 1$.
To ensure that $r(u) = 1$ after $O(\repfact)$ levels, we contract paths on which all nodes have degree two into a single pseudonode during the preprocessing phase. The length of this contracted path is stored in the pseudonode, and is accounted for when computing the sum. This suffices to ensure $r(u)$ decreases by at least one at each level, yielding an $O(n + \repfact^2)$ algorithm.

\section{An $O(n + \repfact \log \repfact)$ Algorithm}
\label{sec-best}
In this section, we extend ideas about tree transformation from the last section to develop an algorithm in which the conquer phase only needs to occur in at most $O(\log \repfact)$ levels. We achieve this by refining the tree transformations described in Section \ref{sec-improvements}.

To ensure that there are only $O(\log \repfact)$ levels in the tree, we transform the tree so as to guarantee that as the conquer phase proceeds down the tree, $r(u)$ decreases by at least a factor of two at each level. This happens automatically when there are two or more unfilled nodes at each node, since to balance the unfilled children, at most $\ceilfrac{r(u) - L}{2}$ replicas will be placed on each of them. Problems can therefore only arise when a tree has a path of nodes each of which have a single, unfilled child. We call such a path a \textit{degenerate chain}. By detecting and contracting all such degenerate chains, we can achieve an $O(\repfact \log \repfact)$ conquer phase.

Fig. \ref{fig-degenerate-unfilled-case} illustrates a degenerate chain. In this figure, each $T_i$ with $1 \leq i \leq t - 1$ is the set of all descendant nodes of $v_i$ which are filled. Thus, $v_1, ..., v_{t-1}$ each have only a single unfilled child (since each $v_i$ has $v_{i+1}$ as an child). In contrast, node $v_t$ has at least two unfilled children. It is easy to see that if the number of leaves in each $T_i$ is $O(1)$ then $t$, the length of the chain, can be as large as $O(\repfact)$. This would imply that there can be $O(\repfact)$ levels in the tree where the entire conquer phase is required. To remove degenerate chains, we contract nodes $v_1, ..., v_{t-1}$ into a single pseudonode $w$, as in Fig. \ref{fig-contracted-nodes}. However, we must take care to ensure that the pair of vectors which pseudonode $w$ returns takes into account contributions from the entire contracted structure. We will continue to use $v_i$ and $T_i$ throughout the remainder of this section to refer to nodes in a degenerate chain.

To find and contract degenerate chains, we add an additional phase, the \textit{transform} phase, which takes place between the divide and conquer phases. Recall that after the divide phase, the set of filled and unfilled children are available at each node. Finding nodes in a degenerate chain is therefore easily done via a breadth-first search. We next consider what information must be stored in the pseudonode, to ensure that correct results are maintained.

\begin{figure}
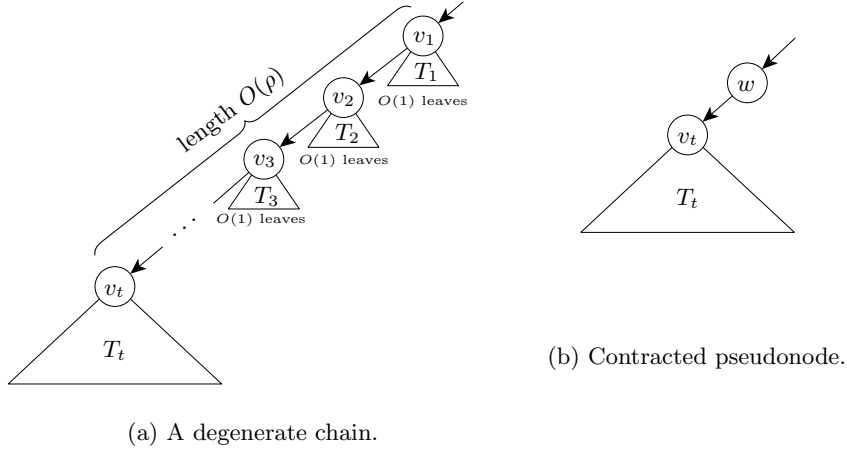

	\begin{subfigure}{0.60\textwidth}
	\centering
	\input{degenerate-case}
	\caption{A degenerate chain.}\label{fig-degenerate-unfilled-case}
	\end{subfigure}
	\begin{subfigure}{0.35\textwidth}
	\centering
	\raisebox{1cm}{ \input{contracted-nodes} }
	\caption{Contracted pseudonode.}\label{fig-contracted-nodes}
	\end{subfigure}
	\caption{Illustration of a degenerate chain in which each $v_i$ where $1 \leq i \leq t-1$ represents a node which has a single unfilled child. All filled descendents of node $v_i$ are collectively represented as $T_i$. In the figure on the right, nodes $v_1, ..., v_{t-1}$ have been contracted into pseudonode $w$.}
\end{figure}

Let $(\vec{a_w}, \vec{b_w})$ be the pair of values which will be returned by pseudonode $w$ at the end of the conquer phase. In order for the transformation to be correct, the vectors $(\vec{a_w}, \vec{b_w})$ must be the same as those which would have been returned at node $v_1$ had no transformation occurred. To ensure this, we must consider and include the contribution of each node in the set \mbox{$T_1 \cup ... \cup T_{t-1} \cup \{v_1, ..., v_{t-1}\}$}. It is easy to see that the failure numbers of nodes in $\{v_1, ..., v_{t-1}\}$ depend only upon whether $r(v_t)$ or $r(v_t) - 1$ replicas are placed on node $v_t$, while the filled nodes in sets $T_1, ..., T_{t-1}$ have no such dependency. Observe that if $r(v_t)$ replicas are placed on $v_t$, then $r(v_i)$ replicas are placed at each node $v_i$. If instead $r(v_t) - 1$ replicas are placed, then $r(v_i) - 1$ replicas are placed at each $v_i$. Since values of $r(v_i)$ are available at each node after the divide phase, enough information is present to contract the degenerate chain before the conquer phase is performed.

The remainder of this section focuses on the technical details needed to support our claim that the transform phase can be implemented in time \mbox{$O(n + \repfact \log \repfact)$} overall. Let $S_w := T_1 \cup ... \cup T_{t-1} \cup \{v_1, ..., v_{t-1}\}$, and let the contibution of nodes in $S_w$ to $\vec{a_w}$ and $\vec{b_w}$ be given by vectors $\vec{a}$ and $\vec{b}$ respectively. The transform phase is then tasked with computing $\vec{a}$ and $\vec{b}$, and contracting the degenerate chain. We will show that this can be done in time $O(|S_w| + r(v_1))$ for each pseudonode $w$.

Pseudocode for the transform phase is given in Algorithm \ref{alg-transf}. The transform phase is started at the root of the tree by invoking \transf{$root,false, \repfact$}. \transf is a modified recursive breadth-first search. As the recursion proceeds down the tree, each node is tested to see if it is part of a degenerate chain (lines \ref{line-bottom} and \ref{line-test2}). If a node is not part of a degenerate chain, the call continues on all unfilled children (line \ref{line-pass-on}). The first node ($v_1$) in a degenerate chain is marked by passing down $chain \gets true$ at lines \ref{line-mark-ru} and \ref{line-mark-rv1}. The value of $r(v_1)$ is also passed down to the bottom of the chain at lines \ref{line-mark-ru} and \ref{line-mark-rv1}. Once the bottom of the chain (node $v_t$) has been reached, the algorithm allocates memory for three vectors, $\vec{a}, \vec{b}$ and $\vec{f}$, each of size $r(v_1)+1$ (line \ref{line-alloc}). These vectors are then passed up through the entire degenerate chain (line \ref{line-return}), along with node $u$, whose use will be explained later. When a node $u$ in a degenerate chain receives $\vec{a}, \vec{b}$, and $\vec{f}$, $u$ adds its contribution to each vector (lines \ref{line-contribstart}-\ref{line-contribend}). The contribution of node $u$ consists of two parts. First, the contribution of the filled nodes is added to $\vec{f}$ by invoking a special \filled subroutine (see Algorithm \ref{alg-filled}) which computes the sum of the failure aggregates of each filled child of $u$ (lines \ref{line-contribstart}-\ref{line-filledend}). Note that \filled uses pass-by-reference semantics when passing in the value of $\vec{f}$. The contribution of node $u$ itself is then added, by summing the number of leaves in all of the filled children, and the number of replicas on the single unfilled child, $v$ (lines \ref{line-ustart}-\ref{line-contribend}). By the time that the recursion reaches the start of the chain on the way back up (line \ref{line-chainback-to-start}), all nodes have added their contribution, and the pseudonode is created and returned (line \ref{line-pseudonodecreate}).

\begin{algorithm}[t]

\SetKwFunction{transf}{Transform}\SetKwFunction{mkPseudo}{Make-Pseudonode}
\SetKwProg{Fn}{Function}{begin}{end}

\Fn{\transf{$u, chain, r(v_1)$}}{
	\If{$u$ has two or more unfilled children}{ \label{line-bottom}
		\ForEach{child $c_i$ unfilled}{
			\label{line-pass-on}$(-, -, -, x) \gets $\transf{$c_i, false, \bot$} \;
			$c_i \gets x$ \label{line-update}\; 
		}
		\lIf{$chain = false$} { \Return{$(\bot, \bot, \bot, u)$} \label{line-vt}} 
		\lElse(\tcp*[f]{$3\cdot O(r(v_1))$ time}){ \Return{$(\vec{0}_{r(v_1) + 1}, \vec{0}_{r(v_1)+1},  \vec{0}_{r(v_1)+1}, u)$} \label{line-alloc} } 
	}
	\If{$u$ has one unfilled child, $v$}{ \label{line-test2}
		\If{$chain = false$} {
			\tcp{pass $r(v)$ as max vector length}$(\vec{a},\vec{b},\vec{f},x) \gets$ \transf{$v, true, r(v)$} \label{line-mark-ru}\;
		}\Else{
			$(\vec{a},\vec{b},\vec{f},x) \gets$ \transf{$v, true, r(v_1)$} \label{line-mark-rv1}\;
		}
		\ForEach{filled child $c_i$} { \label{line-contribstart}
			\filled{$c_i, \vec{f}$} \tcp*[r]{$O(n_i)$ time} \label{line-filledend}
		}
		$k \gets \sum_i \ell_i + r(v) - 1$\label{line-ustart}\;
		$\vec{a}[k+1] \gets \vec{a}[k+1] + 1$\;
		$\vec{b}[k] \gets \vec{b}[k] + 1$\label{line-contribend}\;
		\If{$chain = false$}{ \label{line-chainback-to-start}
			$x \gets $ \mkPseudo{$\vec{a}, \vec{b}, \vec{f}, x$} \label{line-pseudonodecreate}
		}
		\Return{$(\vec{a},\vec{b},\vec{f},x)$}\label{line-return}
	}
}
\caption{Transform phase}\label{alg-transf}
\end{algorithm}

The transformation takes place as \transf is returned back up the tree. At the end of the degenerate chain, node $v_t$ is returned (lines \ref{line-vt}-\ref{line-alloc}), and this value is passed along the length of the entire chain (line \ref{line-return}), until reaching the beginning of the chain, where the pseudonode is created and returned (line \ref{line-pseudonodecreate}). When the beginning of the chain is reached, the parent of $v_1$ updates its reference (line \ref{line-update}) to refer to the newly created pseudonode. At line \ref{line-update} note that if $c_i$ was \textit{not} the beginning of a degenerate chain, $x = c_i$ and the assignment has no effect (see lines \ref{line-vt}-\ref{line-alloc}).

We provide pseudocode for the  \filled and \mkPseudo subroutines in Algorithms \ref{alg-filled} and \ref{alg-mkpseudo}. The \mkPseudo subroutine runs in $O(1)$ time. It is easy to see that the \filled routine runs in $O(n_i)$ time, where $n_i$ is the number of nodes in the subtree rooted at child $c_i$. The \transf routine therefore takes $O(|T_i|)$ time to process a single node $v_i$. The time needed for \transf to process an entire degenerate chain is therefore $O(|S_w|) + 3\cdot O(r(v_1))$, where the $3\cdot O(r(v_1))$ term arises from allocating memory for vectors $\vec{a}$, $\vec{b}$ and $\vec{f}$ at the last node of the chain.

When we sum this time over all degenerate chains, we obtain a running time of $O(n + \repfact \log \repfact)$ for the transform phase. To reach this result, we examine the sum of $r(v_1)$ for all pseudonodes at level $i$. Since there are at most $\repfact$ replicas at each level $i$, this sum can be at most $O(\repfact)$ in any level. There are only $O(\log \repfact)$ levels where $r(u) > 1$ after degenerate chains have been contracted, thus, pseudonodes can be only be present in the first $O(\log \repfact)$ levels of the tree. Therefore the $3\cdot O(r(v_1))$ term sums to $O(\repfact \log \repfact)$ overall. Since $|S_w|$ clearly sums to $O(n)$ overall, the transform phase takes at most $O(n + \repfact \log \repfact)$ time.

Finally, after the transformation has completed, we can ensure that the value of $r(u)$ decreases by a factor of two at each level. This implies that there are only $O(\log \repfact)$ levels where the conquer phase needs to be run in its entirety. Therefore, the conquer phase takes $O(\repfact \log \repfact)$ time overall. When combined with the $O(n)$ divide phase and the $O(n + \repfact \log \repfact$) transform phase, this yields an $O(n + \repfact \log \repfact)$ algorithm for solving replica placement in a tree.

\begin{algorithm}[t]
\SetKwProg{Fn}{Function}{begin}{end}
\Fn{\filled{$u$, $\vec{f}$}}{
	\ElseIf{$u$ is a leaf}{
		$\vec{f}[0] \gets \vec{f}[0]  + 1$\;
		\Return \;
	}{ 
		\ForEach{child $c_i$}{
			\filled{$c_i, \vec{f}$}
		}
		$a \gets \sum_i \ell_i$ \;
		$\vec{f}[a] \gets \vec{f}[a] + 1$\;
		\Return \;
	}
}
\caption{Computes failure aggregate of filled nodes}\label{alg-filled}
\end{algorithm}

\begin{algorithm}[t]
\SetKwProg{Fn}{Function}{begin}{end}
\Fn{\mkPseudo{$\vec{a}$, $\vec{b}$, $\vec{f}$, $x$}}{
	allocate a new node $node$\;
	$node.\vec{a} \gets \vec{a} + \vec{f}$\;
	$node.\vec{b} \gets \vec{b} + \vec{f}$\;
	$node.child \gets x$\;
	\Return $node$
}
\caption{Creates and returns a new pseudonode}\label{alg-mkpseudo}
\end{algorithm}

\section{Conclusion}\label{sec-conclusion}

In this paper, we formulate the replica placement problem and show that it can be solved by a greedy algorithm in $O(n^2 \repfact)$ time. In search of a faster algorithm, we prove that any optimal placement in a tree must be balanced. We then exploit this property to give a $O(n\repfact)$ algorithm for finding such an optimal placement. The running time of this algorithm is then improved, yielding an $O(n + \repfact \log \repfact)$ algorithm. An interesting next step would consist of proving a lower bound for this problem, and seeing how our algorithm compares. In future work we plan to consider replica placement on additional classes of graphs, such as special cases of bipartite graphs.

We would like to acknowledge insightful comments from S. Venkatesan and Balaji Raghavachari during meetings about results contained in this paper, as well as comments from Conner Davis on a draft version of this paper.

% ---- Bibliography ----
%
\bibliography{wads-2015}
\bibliographystyle{splncs03}

\section*{Appendix}

\subsection*{Proof of Property \ref{lem-desc}}
The following property from Section \ref{sec:model} is an easy result regarding failure numbers which is used in the proofs of Theorems \ref{thm-greedy} and \ref{thm-balanced-sufficiency}.

\begin{proof}[Proof of Property \ref{lem-desc}]
Suppose that in $P$ there are $k_i$ replicas placed on leaves in the subtree rooted at $i$. If $i$ fails, $k_i$ replicas fail, yielding $s(i,P) = \repfact - k_i$. Let $k_j$ be the number of replicas in the subtree rooted at $j$. Clearly, $k_j \leq k_i$, yielding the result. \qed
\end{proof}

\subsection*{NP-Hardness of Problem \ref{prob:additive-function}}

However, Problem \ref{prob:additive-function} is NP-hard to solve exactly or approximately. In particular, we can reduce the well-known problems of \textsc{Independent Set} (IS) and \textsc{Dominating Set} (DS)  to Problem \ref{prob:additive-function}. The reduction from DS shows that minimizing only the first entry of $f(P)$, $f_r$, is NP-hard, while the reduction from IS shows that lexico-minimizing the vector \emph{down to} the second-to-last entry, $f_1$ is NP-hard. That is to say, if it were possible to lexico-minimize the vector $\langle f_r, ..., f_2, f_1 \rangle$ in polynomial time, it would imply P = NP.

\subsection*{Proof of Lemma \ref{lem-logroup}}

The proof of Theorem \ref{thm-greedy-system} is greatly simplified through use of an algebraic property of addition on $\ints^n$ under lexicographic order. Recall that a \textit{group} is a pair $\langle S, \cdot \rangle$, where $S$ is a set, and $\cdot$ is a binary operation which is 
\begin{inparaenum}[1)]
\item closed for $S$, 
\item is associative, and has both 
\item an identity and 
\item inverses.
\end{inparaenum}
A \textit{linearly-ordered group} is a group $G = \langle S, \cdot \rangle$, along with a linear-order $\leq$ on $S$ in which for all $x,y,z \in S$, $x \leq y \implies x \cdot z \leq y \cdot z$, i.e. the linear-order on $G$ is \textit{translation-invariant}. Lemma \ref{lem-logroup} states that $\ints^n$ under $\lleq$ has such a property.

\begin{proof}[Proof of Lemma \ref{lem-logroup}]
It is well-known that $G = \langle\ints^n, +\rangle$ is a group. To show $G$ is linearly-ordered, it suffices to show that 
$$\forall~ \vec{\vec{x}},\vec{y},\vec{z} \in \ints^n :~~ \vec{x} \lleq \vec{y} \implies \vec{x} + \vec{z} \lleq \vec{y} + \vec{z}~.$$

If $\vec{x} =\vec{y}$ then surely $\vec{x} + \vec{z} = \vec{y} + \vec{z} \implies \vec{x} + \vec{z} \lleq \vec{y} + \vec{z}$.

If instead, $\vec{x} \lleq \vec{y}$, then let $k = \min_{\vec{x}_i < \vec{y}_i} i$. Note that for all $i$ with $1 \leq i < k$, $\vec{x}_i = \vec{y}_i$, and that $\vec{x}_k < \vec{y}_k$. Consider $\vec{x} + \vec{z} $ and $\vec{y}+\vec{z}$. Surely, for all $1 \leq i <k$, $(\vec{x}+\vec{z})_i = (\vec{y}+\vec{z})_i$, since $\vec{x}_i = \vec{y}_i \implies \vec{x}_i + \vec{z}_i = \vec{y}_i + \vec{z}_i$. Likewise, $(\vec{x}+\vec{z})_k < (\vec{y}+\vec{z})_k$, since $\vec{x}_k < \vec{y}_k \implies \vec{x}_k + \vec{z}_k < \vec{y}_k + \vec{z}_k$.

Therefore, $\vec{x} \lleq \vec{y} \implies \vec{x} +\vec{z} \lleq \vec{y} + \vec{z}$.\qed
\end{proof}

\end{document}